\newtheorem{proposition}{Proposition}
\newtheorem{theorem}{Theorem}
\newtheorem{lemma}{Lemma}
\newtheorem{assumption}{Assumption}
\newtheorem{definition}{Definition}
\begin{document}
\bibliographystyle{IEEE2}
% paper title
\title{A Stackelberg Game Approach Towards Socially-Aware Incentive Mechanisms for Mobile Crowdsensing (Online report)}
\author{Jiangtian~Nie, Jun~Luo,~\IEEEmembership{Senior Member,~IEEE,} Zehui~Xiong, Dusit~Niyato,~\IEEEmembership{Fellow,~IEEE,} Ping~Wang,~\IEEEmembership{Senior Member,~IEEE}
\thanks{Jiangtian Nie is with ERI@N, Interdisciplinary Graduate School as well as School of Computer Science and Engineering, Nanyang Technological University, Singapore. Jun Luo, Zehui Xiong, Dusit Niyato and Ping Wang are with School of Computer Science and Engineering, Nanyang Technological University, Singapore.\vspace*{-4mm}}}
%\author{Jiangtian Nie$^{1,2}$, Zehui Xiong$^2$, Dusit Niyato$^2$, Ping Wang$^2$ and Jun Luo$^2$\\
%$^1$ERI@N, Interdisciplinary Graduate School, Nanyang Technological University (NTU), Singapore\\
%$^2$School of Computer Science and Engineering, Nanyang Technological University (NTU), Singapore}
\maketitle
\vspace*{-12mm}
\begin{abstract}
Mobile crowdsensing has shown a great potential to address large-scale data sensing problems by allocating sensing tasks to pervasive mobile users. The mobile users will participate in a crowdsensing platform if they can receive satisfactory reward. In this paper, to effectively and efficiently recruit sufficient number of mobile users, i.e., participants, we investigate an optimal incentive mechanism of a crowdsensing service provider. We apply a two-stage Stackelberg game to analyze the participation level of the mobile users and the optimal incentive mechanism of the crowdsensing service provider using backward induction. In order to motivate the participants, the incentive is designed by taking into account the social network effects from the underlying mobile social domain. For example, in a crowdsensing-based road traffic information sharing application, a user can get a better and accurate traffic report if more users join and share their road information. We derive the analytical expressions for the discriminatory incentive as well as the uniform incentive mechanisms. To fit into practical scenarios, we further formulate a Bayesian Stackelberg game with incomplete information to analyze the interaction between the crowdsensing service provider and mobile users, where the social structure information (the social network effects) is uncertain. The existence and uniqueness of the Bayesian Stackelberg equilibrium is validated by identifying the best response strategies of the mobile users. Numerical results corroborate the fact that the network effects tremendously stimulate higher mobile participation level and greater revenue of the crowdsensing service provider. In addition, the social structure information helps the crowdsensing service provider to achieve greater revenue gain.
\end{abstract}
\vspace*{-3mm}
\begin{IEEEkeywords}
Crowdsensing, social network effects, incentive mechanism, complete and incomplete information, Bayesian game, Stackelberg game, uncertainty, social influence
\end{IEEEkeywords}

\section{Introduction}\label{Sec:Introduction}
In the past decade, we have been witnessing a fast proliferation of mobile users and devices in daily life. The ubiquitous mobile devices with various embedded functional sensors have remarkably promoted the information generation process. These advances stimulate the rapid development of mobile sensing technologies, and mobile crowdsensing becomes one of the most attractive and popular paradigms. Mobile crowdsensing leverage the sensing capacity of worldwide available smart phones, e.g., GPS, camera and digital compass, to collect distributed sensory data.

The basic crowdsensing platform typically includes a cloud-based system and a collection of smart phones or mobile users. The platform can post a set of sensing tasks with different purposes, and mobile users are actively involved to perform the corresponding tasks. Realizing the great business potential, lots of crowdsensing-based applications have been designed and introduced in a number of areas. Sensorly~\cite{Sensorly} is dedicated for WiFi coverage information, Waze~\cite{Waze} and GreenGPS~\cite{GreenGPS} are to collect road traffic information, DietSensor~\cite{DietSensor} is proposed to share and track the diet nutrition, and Noisetube~\cite{noisetube} is for monitoring the noise pollution.

Nevertheless, voluntary participation in the crowdsensing platform may not be sustainable. This is from the fact that the mobile users need to spend their own resources, e.g., smart phone battery, CPU computing power, storage memory, to accomplish the sensing tasks. Another major concern that discourages the mobile users from participation comes from the potential privacy issues. Therefore, individuals are reluctant to participate and share their collected information due to the lack of sufficient motivation and incentive. Nevertheless, the crowdsensing systems heavily rely on total user participation level and the individual contribution from each user. To stimulate and recruit users with mobile devices to participate in crowdsensing, the crowdsensing platform administrator, i.e., the Crowdsensing Service Provider (CSP), usually provides a reward for mobile users as a monetary incentive to compensate their cost or risk.

It is challenging to design the incentive mechanism that achieves a sustainable and profitable market for the CSP. When the reward is small, the collected sensing information from mobile participants is insufficient. Conversely, when the reward is large, the CSP may incur excessive operation cost. Accordingly, an efficient incentive mechanism has become an emerging topic of interest for a large number of researchers. However, most of the existing works have addressed the incentive mechanism for mobile crowdsensing without considering the interdependent behaviors of mobile users from social domain. This interdependency originates from the network effects. Traditionally, \textit{network effects} refer to the phenomenon that public goods or service is more valuable if it is adopted by more users. In crowdsensing service, the participation behavior of mobile users can be deemed as buying ``public goods'', which means that the mobile users are more willing to participate if the number of other users is greater. For example, in a crowdsensing-based road traffic information sharing application, a user can receive a better and accurate traffic report if more users join and share their road information. Consequently, the complex and interdependent user behaviors post a remarkable challenge to the operation of the crowdsensing platform. More importantly, the network effects frequently exist in densely connected social relationships, which is one of the key criteria to promote the wisdom of crowds~\cite{surowiecki2007wisdom, chakeri2017incentive}.

Nevertheless, only a few works~\cite{chen2016incentivizing, zhang2016privacy} have studied the incentive mechanism for crowdsensing and exploited the network effects at the same time. The authors in~\cite{chen2016incentivizing, zhang2016privacy} investigated the behaviors of mobile users under the global network effects\footnote{Global network effects refer to as the phenomenon that a user will obtain higher value when its behavior aligns with any other users~\cite{easley2010networks}.}, which is not appropriate for the structure of an underlying social domain. By contrast, social (local) network effects refer to the case where each user is only influenced directly by the decisions of other densely socially-connected users~\cite{candogan2012optimal, xiong2017sponsor}. For example, in a mobile crowdsensing platform for sharing road traffic information, a user (driver) can get a better route if more neighbourhood users (users in the same or nearby road) of this user join and contribute their traffic data~\cite{Waze, GreenGPS}. On the contrary, this user cannot obtain any benefits if the users in other distant roads join and share their traffic data. This fact motivates us to explore the role of social (local) network effects in designing the incentive mechanism of crowdsensing service.

In this paper, we propose novel incentive mechanisms by leveraging the underlying social network effects to attract participants to crowdsensing platform. First, the crowdsensing platform administrator, i.e., the CSP, determines the incentive, i.e., the offered reward, to maximize its revenue. Then, based on the given reward, the mobile users decide on their participation level individually by taking the social network effects into account. The above rewarding and participating decision marking process can be inherently modeled as a hierarchical Stackelberg game. Moreover, we consider the uncertainty of social network effects, which is commonly applicable to some of the real-world crowdsensing applications. As such, we also formulate the Bayesian Stackelberg game with incomplete information to analyze and evaluate the impacts of uncertainty of social network effects. The major contributions of this paper are summarized as follows:
\begin{itemize}
 \item To our best knowledge, this is the first work on designing incentive mechanisms for mobile crowdsensing with the consideration of complete and incomplete information on social network effects. In particular, we exploit the social network effects in the game model, which utilizes the structural properties from the underlying social domain, and fully characterizes the social relations among the mobile users.
 \item We model the interaction between the CSP and mobile users as a two-stage Stackelberg game and analyze each stage systematically through backward induction. We investigate two types of incentive mechanism for the crowdsensing platform with complete and incomplete information on social network effects, i.e., the Stackelberg game based incentive mechanism and the Bayesian Stackelberg game based incentive mechanism, respectively.
 \item In the Stackelberg game based incentive mechanism, the CSP and mobile users acquire the exact information on underlying social network effects. We propose the optimal incentive mechanism in terms of discriminatory incentive and uniform incentive, in which the CSP offers the different or the same reward to all the mobile users. For both, we are able to obtain the analytical expression for optimal reward. %We also derive the lower bound of the optimal incentive with incomplete information on user utility.
 \item In the Bayesian Stackelberg game based incentive mechanism, the specific information on social network effects is under uncertainty. We obtain a unique Bayesian Nash equilibrium adopted by the mobile users in closed-form. Thereafter, the existence and uniqueness of the Bayesian Stackelberg equilibrium is proved by identifying the best response strategies of the mobile users.
 \item Performance evaluation is provided to demonstrate the effectiveness of the proposed game theory based socially-aware incentive mechanisms. Numerical results show that the network effects play an important role to promote higher participation level and thus greatly improve the revenue of the CSP. Moreover, the information about social relationship, i.e., social structure, helps the CSP to achieve greater revenue gain.
\end{itemize}

The rest of this paper is organized as follows. Section~\ref{Sec:Related} provides the literature review. Section~\ref{Sec:Model} describes the system model and the game formulation. In Section~\ref{Sec:Solution}, we analyze the mobile user participation level and optimal reward using backward induction. In Section~\ref{Sec:Bayesian}, we formulate a Bayesian Stackelberg game where the social structure information is uncertain, and study the Bayesian game equilibrium. Section~\ref{Sec:Simulation} presents the performance evaluation and Section~\ref{Sec:Conclusion} concludes the paper.

\section{Related works}\label{Sec:Related}

Recently, a large number of prior works have been dedicated to designing incentive mechanisms~\cite{zhang2016incentives}. Auction is a widely-adopted method to design the incentive mechanisms. In~\cite{koutsopoulos2013optimal}, the authors presented a mechanism for participation level determination and reward allocation using optimal reverse auction, in which the CSP receives service queries and initiates an auction for user participation. The authors in~\cite{xu2015incentive} explored the truthful mechanism with strong requirements of data integrity where the tasks are time window dependent. A reverse auction framework is adopted to derive the optimal incentive, which is computationally efficient and individually rational. In~\cite{han2016truthful}, the authors investigated scheduling problem, where the CSP announces a set of tasks and then mobile users compete for the tasks based on the sensing costs and available time periods. The approximation mechanisms for the CSP to schedule and reward the users under certain budget is provided. The authors in~\cite{zhou2017truthful} studied incentivizing user participation and assigning location dependent tasks with capacity budget. A truthful one-round auction with approximation algorithm is proposed to obtain the optimal reward offered to the participants. In~\cite{yang2016incentive}, the authors considered the user-centric model where each user can ask for reserve price, and designed the truthful and scalable auction mechanism for the CSP to achieve revenue maximization. The authors in~\cite{zheng2017budget} addressed how to maximize the valuation of the covered interested regions under limited budget for strategy-proof mobile crowdsensing. In~\cite{wang2018melody}, the authors proposed a long-term dynamic incentive mechanism to capture the dynamic nature of long-term data quality of participants, where a truthful, quality-aware and budget feasible algorithm is designed for task allocation with polynomial-time computational complexity. The authors in~\cite{lin2017frameworks} investigated the auction based incentive mechanism considering social cost minimization and privacy preservation. The participants are selected based on predefined score functions by the CSP, and the computational efficiency, individual rationality, truthfulness and differential privacy are guaranteed. To prevent the Sybil attack where a user illicitly disguises other identities to obtain benefits, the authors in~\cite{lin2017sybil} designed Sybil-proof auction-based incentive mechanisms.

In addition to auction mechanism design, the incentive mechanisms are examined with different objectives. For example, the authors in~\cite{zhan2017incentive} considered that the sensing information has an attached time-sensitive value that decreases over time and focused on the incentive design for cooperative data collection of participants. In~\cite{chakeri2017incentive}, the authors explored the incentive mechanism with multiple CSPs, where the incentive mechanism is modeled as a noncooperative game. The discrete time dynamic inspired by the best response dynamics is proposed to achieve the Nash equilibrium of the modeled game. The authors in~\cite{gan2017social} presented a novel Vickrey-Clarke-Groves game based incentive mechanism for sensing resource sharing by the encouraged participants. The task allocation and resource sharing algorithm is developed to achieve the social fairness and efficiency tradeoff. The authors developed a new framework called Steered Crowdsensing in~\cite{kawajiri2014steered}, which controls incentives by introducing gamifications with monetary reward to location-based services. In~\cite{peng2015pay}, the authors incorporated the consideration of data quality into the mechanism, and rewarded the participant depending on the quality of its collected data. The authors in~\cite{luo2015crowdsourcing} applied Tullock contests to design incentive mechanisms, where the reward includes a fixed contest prize, and Tullock prize function depending on the winner's contribution. In~\cite{duan2012incentive}, the authors proposed a reward-based collaboration mechanism, where the CSP announces a total reward to be shared among collaborators, and the task and reward are allocated if sufficient number of participants are willing to collaborate. In~\cite{han2016posted}, the authors studied a quality-aware Bayesian incentive problem for robust crowdsensing, where the data quality and sensing cost of users are drawn from known distribution.

In~\cite{chakeri2017iterative}, the authors considered a sealed market for the CSP, where the participants have imperfect information on other participants behavior. The iterative game framework is introduced and the incentive mechanism is obtained by best response dynamics with several iterations. The authors in~\cite{zhanyu2017incentive} formulated the one-to-many Nash bargaining game to model the interaction between the CSP and participants. The distributed algorithm that ensures the participators' privacy and reduces the computation load of the CSP is provided. The authors in~\cite{wang2018blockchain} proposed blockchain based distributed incentive mechanism which can remove the security threats caused by a ``trustful'' crowdsensing center. The participants with sensing information contribution obtain the reward that is recorded in transaction blocks. In our previous work~\cite{nie2017socially}, we considered the social network effects that promote the participation level while designing the incentive mechanism. In~\cite{li2017dynamic}, the authors also pointed out the importance of ``network effects'' on social information sharing with the problem of dynamic routing. For example, a user traveling on one route benefits from the content collected by users traveling on another route.
However, the scenario where the network effects is certain has its limitation which may not be applicable to some of the real-world applications such as crowdsensing. In this paper, we consider the uncertain scenario where the social structure information is not known exactly by the CSP and participants. %The proofs of main analytical results are all given with details. In addition, this paper also extends Section~\ref{Sec:Introduction} by providing a more comprehensive presentation of the motivation as well as background, and Section~\ref{Sec:Related} by providing an general discussion of related work.
\begin{figure}[t]
\centering
\includegraphics[width=.75\textwidth]{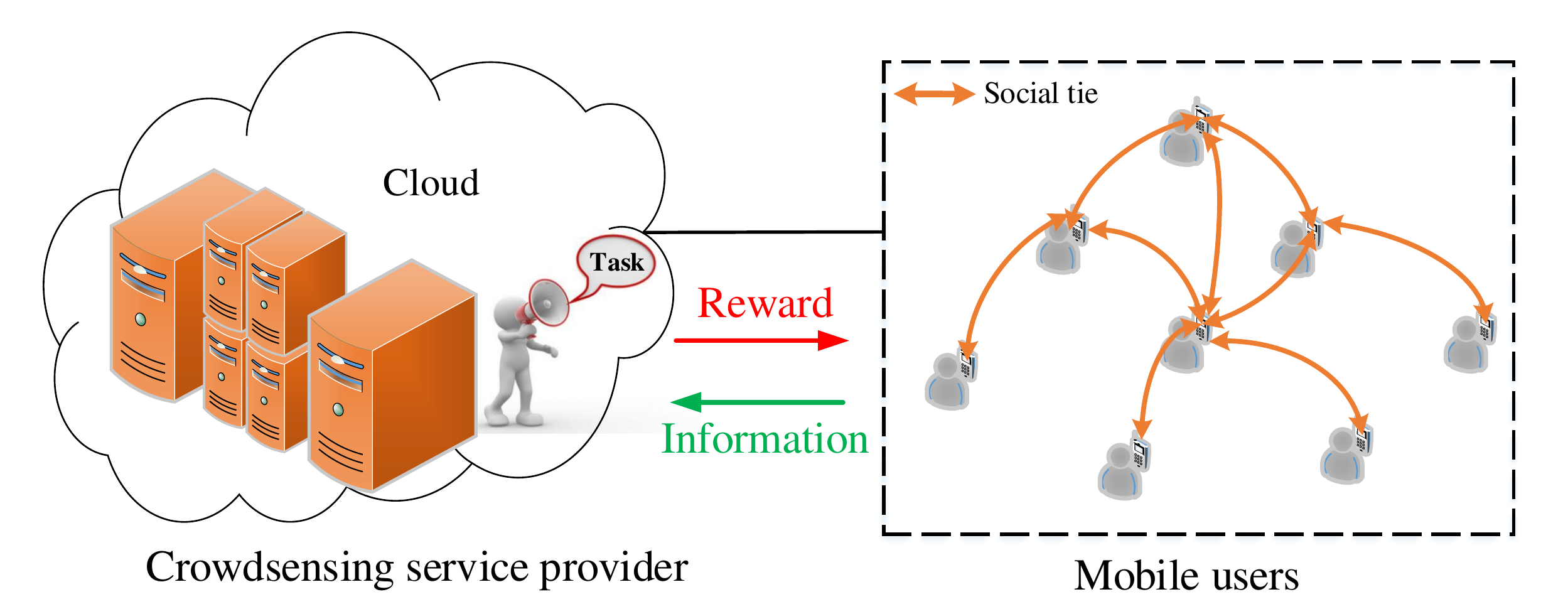}
\caption{Basic system model of mobile crowdsensing platform with social network effects.}\label{Fig:Model}
%\vspace*{-8mm}
\end{figure}

\section{System description and game formulation}\label{Sec:Model}
\begin{table}[!]\label{definition}
\centering
\caption{Main Notations}
\begin{tabular}{c|l}
\hline
Symbol & Definition \\
\hline
$\cal N$, $N$	&	Set of MUs, and the total number of MUs, respectively	\\
\hline
${\cal N}_i$	&	Set of social neighbours of MU $i$ \\
\hline
$x_i$ & Participation level of MU $i$, i.e., the effort level in participation\\
\hline
$\mathbf{x}$, $\mathbf{x}_{-i}$ & The participation levels of all the MUs and all other MUs except MU $i$, respectively\\
\hline
$r_i$ & The offered reward to MU $i$ from the CSP\\
\hline
$a_i, b_i$ & The coefficients capturing the intrinsic value of different MUs\\
\hline
$g_{ij}$ & The influence of MU $j$ on MU $i$\\
\hline
$c$ &  The MU's unit cost associated to its participation level\\
\hline
$\mu$ & The parameter representing the equivalent monetary worth of MUs' participation level\\
\hline
$s, t$ & The coefficients capturing the concavity of the profit obtained from the total contribution of all MUs\\
\hline
$u_i$ & The utility of MU $i$ \\
\hline
$\Pi$ & The revenue of the CSP\\
\hline
$\gamma$ & Given social network effects coefficient\\
\hline
$k$ & The out-degree of MU\\
\hline
$l$ & The in-degree of MU\\
\hline
$P(k)$ & The out-degree distribution of MUs\\
\hline
$H(l)$ & The in-degree distribution of MUs\\
\hline
${\rm Avg}({{\bf{x}}_{ - {{i}}}})$ & The average participation level of social neighbours of MU $i$\\
\hline
$\bar k$ & Average level of social network effects\\
\hline
${\sigma ^2}_k, {\sigma ^2}_l$ &  The two variance of out-degree and in-degree distributions \\
\hline
%$\Pi_s$ & The profit of the sponsored content service provider \\
%\hline
%$\Pi_e$ & The profit of the edge caching content service provider\\
%\hline
%$\mathscr P$ & The payoff of the wireless network operator \\
%\hline
\end{tabular}
\end{table}
We model the interaction among the Crowdsensing Service Provider (CSP) and the socially-aware participants, i.e., Mobile Users (MUs), as a hierarchical Stackelberg game, where the action of each MU is to choose an individual participation level and the action of the CSP is to give the payment as a reward to incentivize the MUs (Fig.~\ref{Fig:Model}). Consider a set of MUs denoted by ${\cal N} \buildrel \Delta \over = \{ 1, \ldots ,N\}$. Each MU $i \in {\cal N}$ determines its participation level or effort level, e.g., sensing data transmission frequency or sensing resolution, denoted by $x_i$ where ${x_i} \in (0,+ \infty)$.

Let $\mathbf{x} \buildrel \Delta \over = ({x_1}, \ldots ,{x_N})$ and $\mathbf{x}_{-i}$ denote the participation levels of all the MUs and all other MUs except MU $i$, respectively. The reward per effort unit provided to the MUs is given as: ${\bf{r}} = {[{{r}_{1}}, \ldots {{r}_{i}}, \ldots ,{{r}_{N}}]^\top}$. Then, the utility of MU $i$ is given by
\begin{equation}\label{Eq:1}
{u_i}({x_i},{{\bf{x}}_{ - {{i}}}}) = {f_i}({x_i}) + {\Phi}({x_i},{{\bf{x}}_{ - {{i}}}}) + {r(x_i)} - {c}({x_i}).
\end{equation}
The first term ${f_i}(x)$ represents the private utility or internal effects that MU $i$ obtains from the participation, which can be formulated as ${f_i}(x_i) = {a_i}{x_i} - {b_i}{x_i}^2$, where $a_i > 0$ and $b_i > 0$ are the coefficients that capture the intrinsic value of the participation to different MUs with heterogeneity~\cite{candogan2012optimal, xiong2017sponsor}. For example, in a crowdsensing-based traffic information sharing application, when a user reports speed and location on a certain road more frequently, i.e., larger $x_i$, the accuracy of the traffic condition on that road is higher~\cite{Waze, GreenGPS}. As in~\cite{candogan2012optimal}, the quadratic form of the internal utility not only allows for tractable analysis, but also serves as a good second-order approximation for a broad class of concave utility functions. Additionally, the linear-quadratic function captures the decreasing marginal returns from participation. In particular, $a_i$ models the maximum internal participation willingness rate, and $b_i$ models such willingness elasticity factor.

The second term, $\Phi({x_i},\mathbf{{x}}_{-i})$ denotes the external benefits gained from the network effects, which is the key component from Eq.~(\ref{Eq:1}). In crowdsensing applications, an MU can enjoy an additional benefit from information contributed or shared by the others~\cite{easley2010networks}. The existing work explored the network effects of global nature, where the additional benefits due to new coming MUs are the same for all the existing MUs~\cite{chen2016incentivizing}. However, due to the structural properties from the underlying social domain, it is more appropriate to consider the network effects locally in crowdsensing service, i.e., the social network effects. Then, we introduce the adjacency matrix ${\mathbf G} = {[{g_{ij}}]_{i, j \in \mathcal{N}}}$. The elements in matrix $g_{ij}$ indicates the influence of MU $j$ on MU $i$, which can be unidirectional or bidirectional. For example, with a larger $g_{ij}$, the participation level of MU $j$ can increase the utility of MU $i$ faster. Motivated by the idea of social reciprocity~\cite{fehr2000fairness, chen2013social}, a user's social behavior to another is likely to imitate and be imitated by the latter's behavior to the former. As a result, two MU social ties to each other tend to be the same. Thus, we consider $g_{ij}= g_{ji}$ in this paper, i.e., the social tie is reciprocal. Nevertheless, the proposed model can be applied to asymmetric social ties straightforwardly (Please refer to Appendix \ref{Sec:discussion on gij}). Specifically, we adopt ${\sum _{j \in \cal N}}{g_{ij}}{x_i}{x_j}$ to represent the additional benefits obtained from the network effects, similar to that in~\cite{candogan2012optimal, gong2015network, xiong2017sponsor}.
%For unidirectional social relations, $g_{ij}$ or $g_{ji}$ represents the social tie between MUs $i$ and $j$. As to bidirectional ones, $g_{ij}$ may not be equal to $g_{ji}$.

The third term, ${r(x_i)}$, is the reward from CSP to the MU $i$, which is equal to ${r_i}{x_i}$ , i.e., the reward is a linear function to the effort or participation level. The last term ${c}({x_i})$ denotes the cost associated to the participation level of the MU, e.g., energy consumption and network bandwidth consumed. Similar to~\cite{chen2016incentivizing}, we assume that the cost is equal to $c{x_i}$, where $c$ is the MU's unit cost.\footnote{
 It is noted that, the same approach can also be applied to the model with heterogeneous unit cost (like ${c_i}{{x_i}^2}$) straightforwardly (Please refer to Appendix \ref{Sec:discussion on cost}). }
Then the utility of MU $i$ is expressed by:

\begin{equation}\label{Eq:2}
{u_i}({x_i},{{\bf{x}}_{ - {{i}}}}, {\bf{r}}) =  {a_i}{x_i} - {b_i}{x_i}^2 +\sum\limits_{j= 1}^N {{g_{ij}}{x_i}{x_j}}   + {r_i}{x_i} - {c}{x_i}.
\end{equation}

The monopoly CSP operates and maintains the platform with a fixed cost, which is ignored for the simplicity of the analysis later. Then, the formulation of revenue for the CSP is given by the payoff from total aggregated contribution of all MUs minus the total reward paid to MUs, i.e.,
%\begin{equation}\label{Eq:3}
%\Pi   = \mu\sum\limits_i{ \ln (1 + {x_i})} - \sum\limits_i {r_i}{x_i},
%\end{equation}
\begin{equation}\label{Eq:3}
\Pi   = \mu\sum\limits_{i=1}^N {( s{x_i} - t{x_i}^2)} - \sum\limits_{i=1}^N {r_i}{x_i}.
\end{equation}
Similar to~\cite{xiong2017sponsor}, we also use the linear-quadratic function for tractability to transform the MUs' participation level into the monetary revenue of the CSP, which features the law of diminishing return. That is, an MU's contribution increases with the MU's effort level but the marginal return decreases. %If all the MUs do not contribute any effort, the utility received by the CSP is $s0-t0^2=0$.
$\mu$ is an adjustable parameter representing the equivalent monetary worth of MUs' participation level, and $s, t> 0$ are coefficients capturing the concavity of the function.

We first address the incentive mechanism by modeling the strategic interactions between the CSP and the MUs as a two-stage single-leader multi-follower Stackelberg game.
\begin{definition}{Two-stage reward-participation game:}
\begin{itemize}
 \item Stage I (Reward): The CSP determines the reward, aiming at the highest revenue, i.e.,
%  \[{{\bf{r}}^*} = \arg \max \left\{\mu\sum\limits_i{ \ln (1 + {x_i})} - \sum\limits_i {r_i}{x_i}\right\};\]
   \[{{\bf{r}}^*} = \arg \max_{\bf{r}} \left\{\mu\sum\limits_{i=1}^N{( s{x_i} - t{x_i}^2)} - \sum\limits_{i=1}^N {r_i}{x_i}\right\};\]
 \item Stage II (Participation): Each MU $i \in {\cal N}$ chooses the participation level $x_i$, given the observed reward $\bf r$ and the participation levels of other MUs ${{\bf{x}}_{ - {{i}}}}$, with the goal to maximize its individual utility, i.e.,     \[x_i^* = \arg \mathop {\max_{x_i}}  {u_i}({x_i},{{\bf{x}}_{ - {{i}}}}, \bf{r}).\]
\end{itemize}
\end{definition}
We solve this two-stage Stackelberg game by finding a subgame perfect equilibrium for the cases of discriminatory incentive mechanism and uniform incentive mechanism for all MUs. %Furthermore, we consider the complete information scenario in which the CSP has knowledge about all $\{a_i\}_{i=1}^N$ and $\{b_i\}_{i=1}^N$, and the incomplete information scenario in which only their expectations, i.e., $\mathbb E[a]$ and $\mathbb E[b]$, are known by the CSP.
%\begin{definition}{Users participation equilibrium (UPE):}\\
%For any reward $\bf r$ given in Stage I, the UPE in Stage II is a strategy profile $\bf{x^*}$ such that no MU can improve its payoff by unilaterally changing its participation level, i.e.,
%\[x_i^* = \arg \mathop {max}\limits_{{x_i} \in [\underline x, \overline x]} {u_i}({x_i},{{\bf{x}}_{ - {{i}}}}, {\bf{p}}), \forall i.\]
%\end{definition}
%Given the UPE in Stage II, we will study the optimal reward mechanism for the CSP in Stage I.

\section{Stackelberg game equilibrium analysis with complete information}\label{Sec:Solution}
\subsection{Stage II: MUs' participation equilibrium}
Based on the definition of the Nash equilibrium, each MU chooses its participation level that is the best response. By setting the first-order derivative $\frac{{\partial {u_i}({x_i},{{\bf{x}}_{ - i}})}}{{\partial {x_i}}}$ to $0$, we obtain the best response of MU $i$ as follows:
\begin{equation}\label{Eq:4}
x_i^* =\max \left\{ 0, \frac{{{r_i} - c + {a_i}}}{{2{b_i}}} + \sum\limits_{j=1}^{N}\frac{{ {{g_{ij}}} }}{{2{b_i}}}{x_j} \right\}, \forall i.
\end{equation}
Each MU's best response includes two parts. $\frac{{{r_i} - c + {a_i}}}{{2{b_i}}} $ is independent from the strategies of the other MUs, and $\sum\limits_{j=1}^{N}\frac{{ {{g_{ij}}} }}{{2{b_i}}}{x_j}$ is dependent on the other MUs' participation levels due to underlying social network effects. Although the participation level strategy of each MU is obtained as in Eq.~(\ref{Eq:4}), the Nash equilibrium cannot be ensured to be unique or even exist since each MU may unboundedly increase its participation level if the other MUs' participation levels are large enough. Therefore, we present a sufficient assumption, under which there exists the unique Nash equilibrium as described in Theorem~1. Regarding the assumption, the MU has the upper bound on participation level, e.g., due to the battery capacity of a mobile device, and thus Assumption~1 is reasonable.
\begin{assumption}
$\sum\limits_{j=1}^{N} {\frac{{{g_{ij}}}}{{{2b_i}}}}  < 1, \forall i$.
\end{assumption}
\begin{theorem}
Under Assumption~1, the existence and uniqueness of MU participation equilibrium, i.e., the Nash equilibrium of Stage~II in this Stackelberg game, can be guaranteed.
\end{theorem}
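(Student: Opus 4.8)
The plan is to recast the Stage-II equilibrium characterization in Eq.~(\ref{Eq:4}) as a fixed-point problem for a best-response operator, and then to show that Assumption~1 makes this operator a contraction, so that the Banach fixed-point theorem yields existence \emph{and} uniqueness simultaneously.

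First I would collect the $N$ best-response conditions into vector form. Writing $B \buildrel \Delta \over = \left[ g_{ij}/(2b_i) \right]_{i,j\in\mathcal N}$ and $\mathbf d \buildrel \Delta \over = \left[ (r_i - c + a_i)/(2b_i) \right]_{i\in\mathcal N}$, Eq.~(\ref{Eq:4}) states that $\mathbf x^\ast$ is a Stage-II Nash equilibrium if and only if $\mathbf x^\ast = T(\mathbf x^\ast)$, where the best-response operator $T$ acts componentwise by $T_i(\mathbf x) = \max\bigl\{0,\, d_i + (B\mathbf x)_i\bigr\}$. Since the $g_{ij}$ are nonnegative influence weights and each $b_i>0$, $T$ maps the closed set $\mathbb R_{\ge 0}^N$ into itself, and $\bigl(\mathbb R_{\ge 0}^N,\|\cdot\|_\infty\bigr)$ is a complete metric space, so Banach's theorem will apply once $T$ is shown to be a contraction.

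The crux is the contraction estimate, and here I would use the elementary fact that $t\mapsto\max\{0,t\}$ is nonexpansive on $\mathbb R$, i.e. $|\max\{0,a\}-\max\{0,b\}|\le|a-b|$. Then for any $\mathbf x,\mathbf y\in\mathbb R_{\ge 0}^N$ and every $i$,
\[
|T_i(\mathbf x)-T_i(\mathbf y)| \;\le\; \bigl|(B(\mathbf x-\mathbf y))_i\bigr| \;\le\; \sum_{j=1}^N \frac{g_{ij}}{2b_i}\,|x_j-y_j| \;\le\; \Bigl(\sum_{j=1}^N \frac{g_{ij}}{2b_i}\Bigr)\,\|\mathbf x-\mathbf y\|_\infty .
\]
Taking the maximum over $i$ and invoking Assumption~1 gives $\|T(\mathbf x)-T(\mathbf y)\|_\infty \le \kappa\,\|\mathbf x-\mathbf y\|_\infty$ with $\kappa \buildrel \Delta \over = \max_i \sum_{j} g_{ij}/(2b_i) < 1$. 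Banach's fixed-point theorem then produces a unique $\mathbf x^\ast$ with $T(\mathbf x^\ast)=\mathbf x^\ast$, which is precisely the unique Stage-II equilibrium; as a by-product the best-response iteration $\mathbf x^{(n+1)}=T(\mathbf x^{(n)})$ converges geometrically to it.

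I would close with a remark that the same conclusion can be reached by a concavity argument: each $u_i(\cdot,\mathbf x_{-i})$ is strictly concave in $x_i$ since $b_i>0$, and Assumption~1 makes $I-B$ strictly row-diagonally dominant, hence a nonsingular $M$-matrix, so the unconstrained stationarity system has a unique nonnegative solution from which no MU benefits by deviating. The one genuine obstacle is that the action set $(0,+\infty)$ is unbounded, so off-the-shelf compactness/Kakutani arguments do not directly apply; Assumption~1 is exactly the structural condition that defuses this, capping the ``social multiplier'' $\sum_j g_{ij}/(2b_i)$ strictly below one so that mutual reinforcement among MUs cannot blow up — which is what forces the contraction constant $\kappa<1$ and, equivalently, the diagonal dominance of $I-B$.
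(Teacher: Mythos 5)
Your proof is correct, and it takes a genuinely different route from the paper's. The paper splits the claim in two: for existence it first bounds any equilibrium strategy by $\widehat x$ using Assumption~1, restricts attention to the compact convex set $[0,\widehat x]$, and invokes the standard existence result for concave games (continuity of $u_i$ together with $\partial^2 u_i/\partial x_i^2 = -2b_i<0$); for uniqueness it verifies the diagonal-dominance condition $-\partial^2 u_i/\partial x_i^2 = 2b_i > \sum_{j}|g_{ij}| = \sum_j \left|\partial^2 u_i/\partial x_i\partial x_j\right|$ and appeals to Moulin's dominance-solvability theorem. You instead observe that Eq.~(\ref{Eq:4}) characterizes equilibria exactly as fixed points of the best-response map $T$, and show that Assumption~1 makes $T$ an $\ell_\infty$-contraction with modulus $\kappa=\max_i\sum_j g_{ij}/(2b_i)<1$, the only ingredients being the nonexpansiveness of $t\mapsto\max\{0,t\}$ and the nonnegativity of the $g_{ij}$; Banach's theorem then delivers existence and uniqueness in one stroke, with no compactness argument needed. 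Both proofs ultimately rest on the same structural fact --- Assumption~1 is precisely row diagonal dominance of $\mathbf{B}-\mathbf{G}$ --- but your version has the extra payoff that the best-response iteration converges geometrically to the equilibrium, which is exactly the content of Proposition~1 (convergence of Algorithm~1) that the paper asserts without proof. The one hypothesis you use implicitly and should state is $g_{ij}\ge 0$; the paper relies on it just as tacitly when it writes $\sum_j g_{ij}=\sum_j|g_{ij}|$ in its uniqueness step.
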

\begin{proof}
%Please refer to the appendix for the details.
\textbf{Existence of MU participation equilibrium:} We denote $\bf x^*$ as the strategy profile in the MU participation sub-game, and $x^\dag_i$ as the largest participation level in $\bf x^*$. Then, we have
\begin{eqnarray*}
x^\dag_i &=& \left(\frac{{{r_i} - c + {a_i}}}{{2{b_i}}} + \sum\limits_{j=1}^{N}\frac{{ {{g_{ij}}} }}{{2{b_i}}}{x_j} \right)^+ \le \frac{{{r_i} - c + {a_i}}}{{2{b_i}}} + \sum\limits_{j=1}^{N} {{x^\dag_i}\frac{{{g_{ij}}}}{{2{b_i}}}}\le \frac{{\left| {{{r_i} - c + {a_i}}} \right|}}{{2{b_i}}} + \sum\limits_{j=1}^{N} {{x^\dag_i}\frac{{\left| {{g_{ij}}} \right|}}{{2{b_i}}}}.
\end{eqnarray*}
Thus, under Assumption 1, we have
$$x^\dag_i  \le \frac{{\left|{{r_i} - c + {a_i}} \right|}}{{2{b_i} - \sum\limits_{j=1}^{N} {\left| {{g_{ij}} } \right|} }} =  \widehat x.$$
As a result, the strategy space $[0, \widehat x]$ is convex and compact, and the utility function ${u_i}({x_i},{\mathbf{x}}_{-i})$ is continuous in $x_i$ and ${\bf x}_{-i}$. We also have the second-order derivative of MU's objective function as follows:
$$\frac{{{\partial ^2}{u_i}}}{{{\partial ^2}{x_i}}}=-2b_i<0.$$ Thus, the MU participation sub-game is a concave game which always admits the Nash equilibrium.
\par
\textbf{Uniqueness of MU participation equilibrium:} Firstly, we have
$$ - \frac{{{\partial ^2}{u_i}}}{{{\partial}{x_i}^2}} =  - (-2{b_i} + {g_{ii}}) = 2{b_i}.$$ Then, based on Assumption~1, we have
\begin{eqnarray}
- \frac{{{\partial ^2}{u_i}}}{{{\partial}{x_i}^2}} &>& \sum\limits_{j=1}^N {{g_{ij}}} = \sum\limits_{j=1}^N {\left| {{g_{ij}}} \right|}  = \sum\limits_{j=1}^N{\left| { - \frac{{{\partial ^2}{u_i}}}{{\partial {x_i}{x_j}}}} \right|},
\end{eqnarray}
which satisfies the dominance solvability condition, i.e., Moulin's Theorem~\cite{moulin1984dominance}. As a result, the uniqueness of MU participation equilibrium is guaranteed under Assumption~1. The proof is then completed.
\end{proof}

Then, we propose the best response dynamics algorithm to obtain the Nash equilibrium with respect to MUs' participation level, as shown in Algorithm 1. The algorithm iteratively updates the MUs' strategies
based on their best response functions in Eq.~(\ref{Eq:4}), and converges to the Nash equilibrium of MU participation sub-game.
\begin{algorithm}\footnotesize
 \caption{Simultaneous best-response updating for finding Nash equilibrium of MU participation sub-game}
 \begin{algorithmic}[1]
 \STATE \textbf{Input:} \\
 Precision threshold $\epsilon$, $x_i^{[0]} \leftarrow 0 $, $x_i^{[1]} \leftarrow 1 + \epsilon$, $k\leftarrow 1$;
  \WHILE {$\left\|x_i^{[k]} - x_i^{[k-1]}\right\|_1>\epsilon$}
  \FORALL {$i \in \cal N$}
   \STATE  $x_i^{[k+1]} = \left( \frac{{{r_i} - c + {a_i}}}{{2{b_i}}} + \sum\limits_{j=1}^{N} {{x_j^{[k]}}\frac{{{g_{ij}} }}{{2{b_i}}}} \right)^+$;
  \ENDFOR
  \STATE $k\leftarrow k + 1 $;
  \ENDWHILE
  \STATE \textbf{Return} ${\bf{x}}_i^{[k]}$;
 \end{algorithmic}\label{algorithm}
\end{algorithm}

\begin{proposition}
Algorithm 1 achieves the Nash equilibrium of MU participation sub-game.
\end{proposition}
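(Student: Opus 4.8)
The plan is to observe that Algorithm~1 is nothing but the Picard (fixed-point) iteration $\mathbf{x}^{[k+1]} = T(\mathbf{x}^{[k]})$ for the best-response operator $T:\mathbb{R}_{\ge 0}^N \to \mathbb{R}_{\ge 0}^N$ defined by
\[
T_i(\mathbf{x}) = \left( \frac{r_i - c + a_i}{2b_i} + \sum_{j=1}^N \frac{g_{ij}}{2b_i}\, x_j \right)^{+}, \qquad i \in \mathcal{N},
\]
and then to show that $T$ is a contraction under Assumption~1, so that the Banach fixed-point theorem delivers convergence. First I would note that the simultaneous update in line~4 applies $T$ coordinate-wise to the whole current vector $\mathbf{x}^{[k]}$, so the sequence generated by the algorithm from $\mathbf{x}^{[0]}=\mathbf{0}$ is exactly $\{T^{k}(\mathbf{0})\}_{k\ge 0}$. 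Moreover, a vector is a fixed point of $T$ if and only if it satisfies the best-response condition Eq.~(\ref{Eq:4}) for all MUs simultaneously, i.e., if and only if it is a Nash equilibrium of the Stage~II sub-game.

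Next I would establish the contraction estimate, which is the crux of the argument. Write $T(\mathbf{x}) = (M\mathbf{x}+\mathbf{d})^{+}$ with $M = [\,g_{ij}/(2b_i)\,]_{i,j\in\mathcal N}$ and $d_i = (r_i - c + a_i)/(2b_i)$, the positive part taken componentwise. Since $y\mapsto y^{+}$ is $1$-Lipschitz on $\mathbb{R}$, the projection onto the nonnegative orthant is nonexpansive in the $\ell_\infty$ norm, whence for all $\mathbf{x},\mathbf{x}'$
\[
\|T(\mathbf{x}) - T(\mathbf{x}')\|_\infty \;\le\; \|M(\mathbf{x}-\mathbf{x}')\|_\infty \;\le\; \|M\|_\infty \, \|\mathbf{x}-\mathbf{x}'\|_\infty,
\]
where $\|M\|_\infty = \max_{i}\sum_{j=1}^N \frac{|g_{ij}|}{2b_i} = \max_{i}\sum_{j=1}^N \frac{g_{ij}}{2b_i} =: \rho$. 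Assumption~1 gives precisely $\rho < 1$, so $T$ is a contraction with modulus $\rho \in [0,1)$ on the complete metric space $(\mathbb{R}_{\ge 0}^N,\|\cdot\|_\infty)$; moreover $T$ maps this set into the compact box $[0,\widehat{x}]^N$ identified in the proof of Theorem~1, so the iterates remain bounded.

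Finally, the Banach fixed-point theorem yields a unique fixed point $\mathbf{x}^{*}$ of $T$ together with the geometric bound $\|\mathbf{x}^{[k]} - \mathbf{x}^{*}\|_\infty \le \tfrac{\rho^{k}}{1-\rho}\,\|\mathbf{x}^{[1]}-\mathbf{x}^{[0]}\|_\infty$, so $\mathbf{x}^{[k]}\to\mathbf{x}^{*}$ from the algorithm's initialization; by the equivalence noted above and by Theorem~1, $\mathbf{x}^{*}$ is the (unique) Nash equilibrium of the MU participation sub-game. Geometric convergence also forces $\|\mathbf{x}^{[k]}-\mathbf{x}^{[k-1]}\|_1\to 0$ (all norms on $\mathbb{R}^N$ being equivalent), so the stopping test in line~2 is met after finitely many iterations with output within an $\mathcal{O}(\epsilon)$ distance of $\mathbf{x}^{*}$, and letting $\epsilon\to 0$ gives exact convergence. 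The main obstacle is the contraction step: one must handle the positive-part projection (it is nonexpansive, hence harmless) and, crucially, bound the correct operator norm --- the $\ell_\infty$ induced norm, whose row sums are exactly what Assumption~1 controls --- rather than attempting to bound the spectral radius directly. Everything else is a routine invocation of Banach's theorem together with the existence and uniqueness already furnished by Theorem~1.
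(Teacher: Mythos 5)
Your proof is correct, and it is worth noting that the paper itself states Proposition~1 without supplying any proof at all: it simply asserts the result after Algorithm~1 and moves on to discussing the precision threshold $\epsilon$. What the paper implicitly leans on is the dominance-solvability (diagonal-dominance) condition it verifies in the uniqueness part of Theorem~1, namely $2b_i > \sum_{j}|g_{ij}|$ for all $i$; Moulin's theorem guarantees not only uniqueness but also convergence of best-response dynamics under that condition, which is presumably the intended justification. Your explicit route --- writing the simultaneous update as the Picard iteration of $T(\mathbf{x}) = (M\mathbf{x}+\mathbf{d})^{+}$, using nonexpansiveness of the componentwise positive part in $\ell_\infty$, and bounding $\|M\|_\infty = \max_i \sum_j g_{ij}/(2b_i) < 1$ via Assumption~1 --- is exactly the hands-on version of that argument: the row-sum condition you contract against is literally the same inequality the paper uses for diagonal dominance. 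Your version buys a self-contained proof, an explicit geometric rate $\rho^{k}/(1-\rho)$, and convergence from an arbitrary initialization, none of which the paper provides. Two small points: the algorithm actually initializes $x_i^{[1]} \leftarrow 1+\epsilon$ (so the iteration starts from $(1+\epsilon)\mathbf{1}$, not from $T(\mathbf{0})$), which is immaterial since Banach convergence is initialization-independent; and your closing claim should be phrased as the paper phrases it, namely that for fixed $\epsilon>0$ the algorithm terminates at an $\epsilon$-approximate equilibrium, with exactness only in the limit $\epsilon \to 0$.
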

%\begin{proof}
%Please refer to the appendix for the details.
%Let ${\kappa _{ij}} = \frac{{{g_{ij}}}}{{2{b_i}}}$ and $\Delta x_i^{[k]}$ = $x_i^{[k]} - x_i^*$, for all $i \in \cal N$. From step 4 of Algorithm 1, we have
%\begin{eqnarray}
%\left| {\Delta x_i^{[k]}} \right| \le \left| {\sum\limits_{j=1}^{N} {{\kappa _{ij}}} \Delta x_j^{[k]}} \right| \le \sum\limits_{j=1}^{N} {{\kappa _{ij}}} \left| {\Delta x_j^{[k]}} \right|.
%\end{eqnarray}
%We also have ${\left\| {\Delta {{\bf{x}}^{[k]}}} \right\|_\infty } = \mathop {\max }\limits_i \left| {\Delta x_i^{[k]}} \right|$. Accordingly, we conclude that
%\begin{eqnarray*}
%{\left\| {\Delta {{\bf{x}}^{[k]}}} \right\|_\infty } &\le& \mathop {\max }\limits_i \left( {\left| {\sum\limits_{j=1}^{N} {{\kappa _{ij}}} \Delta x_j^{[k]}} \right|} \right) \le \mathop {\max }\limits_i \sum\limits_{j=1}^{N} {{\kappa _{ij}}} \mathop {\max }\limits_i \left( {\left| {\Delta x_j^{[k]}} \right|} \right) = \mathop {\max }\limits_i \sum\limits_{j=1}^{N} {{\kappa _{ij}}} {\left\| {\Delta {{\bf{x}}^{[k]}}} \right\|_\infty }.
%\end{eqnarray*}
%Under Assumption 1, we have $\sum\limits_{j=1}^{N} {\kappa _{ij}}< 1, \forall i$. Thus, ${\left\| {\Delta {x^{[k]}}} \right\|_\infty } \le {\left\| {\Delta {x^{[k - 1]}}} \right\|_\infty }$, and the algorithm leads to a contraction mapping of ${\left\| {\Delta {x^{[k - 1]}}} \right\|_\infty }$. The proof is then completed.
%\end{proof}
Note that Algorithm 1 achieves the approximate Nash equilibrium of MU participation sub-game, and the approximate accuracy, which measured by the gap between the achieved results and the optimal Nash equilibrium, depends on the precision threshold $\epsilon$. The convergence speed of the proposed algorithm also depends on precision threshold $\epsilon$. When $\epsilon$ is small, the number of iterations needed is large but the achieved results are more accurate. Conversely, when $\epsilon$ is big, the number of iterations needed is small but the achieved results are less accurate.

For ease of presentation, we have the following definitions, ${\bf B}:=diag(2b_1, 2b_2, \ldots, 2b_N)$, ${\bf a}:= [a_i]_{N \times 1}$, ${\bf 1}:= [1]_{N \times 1}$, ${\bf G}:=[g_{ij}]_{N \times N}$, ${\bf r}:= [r_i]_{N \times 1}$ and ${\bf{I}}$ is an $N \times N$ identity matrix. For the rest of the paper, similar to~\cite{xiong2017sponsor, zhou2017peer}, we consider the practical situation where all the MUs have positive participation levels at the Stackelberg equilibrium, i.e., a special case of Eq.~(\ref{Eq:4}). Then, with Lemma~1, we can rewrite Eq.~(\ref{Eq:4}) in a matrix form as follows:
\begin{equation}
{\bf{x}} = {\bf K}\left( {{\bf{a}} + {\bf{r}} - c{\bf{1}}} \right),
\end{equation}
where ${\bf K} = {\left( {{\bf{B}} - {\bf{G}}} \right)^{ - 1}}$.
\begin{lemma}
${{\bf{B}} - {\bf{G}}}$ is positive definite matrix, which is invertible.
\end{lemma}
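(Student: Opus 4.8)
The plan is to show that $\mathbf{B}-\mathbf{G}$ is symmetric (immediate, since $\mathbf{B}$ is diagonal and $g_{ij}=g_{ji}$ by assumption) and then establish strict diagonal dominance, from which positive definiteness follows by the Gershgorin circle theorem. First I would note that the $(i,i)$ entry of $\mathbf{B}-\mathbf{G}$ is $2b_i - g_{ii}$ and, more importantly, observe that Assumption~1 states $\sum_{j=1}^{N}\frac{|g_{ij}|}{2b_i}<1$, i.e. $\sum_{j=1}^{N}|g_{ij}| < 2b_i$ for every $i$.

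Next I would compare the diagonal entry against the off-diagonal row sum. The diagonal entry is $2b_i - g_{ii}$; since $2b_i > \sum_{j=1}^{N}|g_{ij}| \ge |g_{ii}|$, we get $2b_i - g_{ii} \ge 2b_i - |g_{ii}| > 0$, so the diagonal entries are positive. The sum of absolute values of the off-diagonal entries in row $i$ is $\sum_{j\neq i}|g_{ij}|$. Diagonal dominance requires
\begin{equation*}
2b_i - g_{ii} > \sum_{j\neq i}|g_{ij}|.
\end{equation*}
This follows from Assumption~1: $\sum_{j\neq i}|g_{ij}| = \sum_{j=1}^{N}|g_{ij}| - |g_{ii}| < 2b_i - |g_{ii}| \le 2b_i - g_{ii}$. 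Hence $\mathbf{B}-\mathbf{G}$ is a strictly (row) diagonally dominant matrix with positive diagonal entries.

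Finally I would invoke the standard fact that a real symmetric matrix that is strictly diagonally dominant with positive diagonal entries is positive definite — this is a direct consequence of the Gershgorin disc theorem, since every eigenvalue lies in a disc centered at a positive diagonal entry with radius strictly smaller than that entry, forcing all eigenvalues to be strictly positive. Positive definiteness implies the matrix is nonsingular, hence invertible, so $\mathbf{K}=(\mathbf{B}-\mathbf{G})^{-1}$ is well-defined, which is exactly what is needed to justify the matrix form of the best responses. I do not anticipate a serious obstacle here; the only point that warrants a line of care is confirming symmetry (relying on the reciprocity $g_{ij}=g_{ji}$) so that positive definiteness in the usual symmetric sense is meaningful, and handling the $g_{ii}$ term cleanly when passing from Assumption~1 to strict diagonal dominance.
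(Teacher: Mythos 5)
Your proof is correct and follows essentially the same route as the paper: both establish strict diagonal dominance of $\mathbf{B}-\mathbf{G}$ from Assumption~1 and then apply the Gershgorin circle theorem to conclude that all eigenvalues are positive, hence the matrix is positive definite and invertible. The only cosmetic difference is that the paper simply sets $g_{ii}=0$ and treats the $g_{ij}$ as nonnegative (so $\sum_j g_{ij}=\sum_j |g_{ij}|$), whereas you carry the $g_{ii}$ and absolute-value bookkeeping explicitly; both are fine.
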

\begin{proof}
%If Assumption~1 holds, we have
%\begin{eqnarray}
%{\left( {{\bf{B}} - {\bf{G}}} \right)_{ii}} &=& 2{b_i} - {g_{ii}} = 2{b_i}  > \sum\limits_{j=1, j\ne i}^{N} {{g_{ij}}}  = \sum\limits_{j=1}^{N} {{g_{ij}}} \nonumber \\ &=& - \sum\limits_{j=1}^{N} {{{\left( {{\bf{B}} - {\bf{G}}} \right)}_{ij}}}  =  - \sum\limits_{j=1}^{N} {\left| {{{\left( {{\bf{B}} - {\bf{G}}} \right)}_{ij}}} \right|}.
%\end{eqnarray}

We first denote $({\bf B-G})_{ij}$ as the value in the $i$th row and the $j$th column of the matrix ${\bf B-G}$, and it holds that ${\left( {{\bf{B}} - {\bf{G}}} \right)_{ii}} = 2{b_i} - {g_{ii}} =  2{b_i}$ since we have $g_{ii} = 0$. Under Assumption~1, we also have $2{b_i}  > \sum\limits_{j=1}^{N} {{g_{ij}}}$. Furthermore, we observe that $\sum\limits_{j = 1}^N {{g_{ij}}}  =  - \sum\limits_{j \ne i}^{} {{{\left( {{\bf{B}} - {\bf{G}}} \right)}_{ij}}}  = \sum\limits_{j \ne i}^{} {\left| {{{\left( {{\bf{B}} - {\bf{G}}} \right)}_{ij}}} \right|}  $. Therefore, it holds that ${\left( {{\bf{B}} - {\bf{G}}} \right)_{ii}} = 2{b_i} > \sum\limits_{j \ne i} {\left| {{{\left( {{\bf{B}} - {\bf{G}}} \right)}_{ij}}} \right|} $.

Accordingly, ${{\bf{B}} - {\bf{G}}}$ is strictly diagonally dominant and all the diagonal elements, i.e., $2b_i$ are larger than $0$. Based on Gershgorin circle theorem~\cite{weisstein2003gershgorin}, every eigenvalue $\lambda$ of ${{\bf{B}} - {\bf{G}}}$ satisfies
\begin{equation}
\left| {{{\left( {{\bf{B}} - {\bf{G}}} \right)}_{ii}} - \lambda } \right| < \sum\limits_{j=1}^{N} {\left| {{{\left( {{\bf{B}} - {\bf{G}}} \right)}_{ij}}} \right|}.
\end{equation}
Moreover, we know $\lambda > 0$, and thus ${{\bf{B}} - {\bf{G}}}$ is a positive definite matrix, from which its invertibility follows. The proof is then completed.
\end{proof}
\subsection{Stage I: Optimal incentive mechanism}
In this stage, the monopoly CSP determines the reward to be paid to the MUs, the objective of which is to maximize the CSP's revenue. Specifically, we investigate the discriminatory incentive mechanism and the uniform incentive mechanism, respectively. The significance of each incentive mechanism is as follows. Under the uniform incentive mechanism, the equilibrium ensures a fair reward applied to all MUs. Moreover, the uniform incentive mechanism is simple to implement in the crowdsensing applications. However, the CSP has limited degree of freedom to maximize its profit. By contrast, under the discriminatory incentive mechanism, the CSP can customize the reward for each MU, matching with the MU's preference and capability. As such, the profit obtained under the discriminatory incentive mechanism is expected to be superior to that of the uniform incentive mechanism. This is also confirmed in our numerical results.

%Furthermore, we investigate the incentive mechanism with incomplete information on user utility.
\textit{1) Discriminatory incentive mechanism:}
Under reward discrimination, the CSP is able to provide different reward for different MUs as incentive to maximize its revenue. The revenue maximization problem can be formulated as follows:
\begin{equation}
\begin{aligned}
& \underset{\bf r}{\text{maximize}}
& & {\Pi} = \mu\sum\limits_{i=1}^{N}{( s{x_i} - t{x_i}^2)} - \sum\limits_{i=1}^{N} {r_i}{x_i}\\
& & & \quad = \mu( s{\bf{1}}^\top{\bf{x}}- {\bf{x}}^\top t{\bf{x}}) - {\bf{r}}^\top{\bf{x}}.\\
& \text{subject to}
& & {\bf{x}} = {\bf{K}}\left( {\bf{a}} + {\bf{r}} - c{\bf{1}}\right).\\
\end{aligned}\label{Eq:5}
\end{equation}
By plugging $\bf x$ into the objective function in Eq.~(\ref{Eq:5}), we have
\begin{equation}
\Pi = \mu \left(s{{\bf{1}}^ \top }{\bf{K}}\left( {{\bf{a}} + {\bf{r}} - c{\bf{1}}} \right) - t{\left( {{\bf{a}} + {\bf{r}} - c{\bf{1}}} \right)^ \top }{{\bf{K}}^2}\left( {{\bf{a}} + {\bf{r}} - c{\bf{1}}} \right)\right) - {{\bf{r}}^ \top }{\bf{K}}\left( {{\bf{a}} + {\bf{r}} - c{\bf{1}}} \right).
\end{equation}
Taking the partial derivative of the objective function in Eq.~(\ref{Eq:5}) with respect to the decision vector $\bf r$ to zero, i.e., $\frac{{\partial \Pi }}{{\partial {\bf{r}}}} = 0$, we obtain
\begin{equation}
\mu \left(s{\bf{K1}} - 2t{{\bf{K}}^2}\left( {{\bf{a}} + {\bf{r}} - c{\bf{1}}} \right)\right) - {\bf{K}}\left( {{\bf{a}} + {\bf{r}} - c{\bf{1}}} \right) - {\bf{Kr}} = 0.
\end{equation}
Then, we have
\begin{equation}
\mu \left(s{\bf{K1}} - 2t{{\bf{K}}^2}\left({{\bf{a}} - c{\bf{1}}} \right)\right) - {\bf{K}}\left( {{\bf{a}} - c{\bf{1}}} \right) = \left( {2{\bf{K}} + 2\mu t{{\bf{K}}^2}} \right){\bf{r}}.
\end{equation}
Finally, we obtain the optimal value ${\bf r}^*$, which is represented as follows:
\begin{equation}\label{Eq:6}
{\bf r}^* = {\left(2{\bf{I}}+ 2\mu t{\bf{K}}\right)^{ - 1}}\left( \mu \left(s{\bf{1}} - 2t{\bf{K}}\left( {{\bf{a}} - c{\bf{1}}} \right)\right) - \left( {{\bf{a}} - c{\bf{1}}} \right)\right).
\end{equation}
\textit{2) Uniform incentive mechanism:}
In this case, the CSP can only choose a single uniform reward to be paid to all the MUs, i.e., $r_i = r$, for all $i$. Then, the optimization problem is given by
\begin{equation}
\begin{aligned}
& \underset{r}{\text{maximize}}
& & {\Pi} = \mu\sum\limits_{i=1}^{N}{( s{x_i} - t{x_i}^2)} - {r}\sum\limits_{i=1}^{N}{x_i}\\
& & & \quad = \mu( s{\bf{1}}^\top{\bf{x}}- {\bf{x}}^\top t{\bf{x}}) - {{r}}{\bf{1}}^\top{\bf{x}}.\\
& \text{subject to}
& & {\bf{x}} = {\bf{K}}\left[{\bf{a}} + (r - c){\bf{1}}\right].\\
\end{aligned}\label{Eq:7}
\end{equation}
Similarly, we eliminate $\bf x$ from the objective function in Eq.~(\ref{Eq:7}), and we obtain
\begin{equation}
\Pi = \mu \left(s{{\bf{1}}^ \top }{\bf{K}}\left( {{\bf{a}} + (r - c){\bf{1}}} \right) - t{\left( {{\bf{a}} + (r - c){\bf{1}}} \right)^ \top }{{\bf{K}}^2}\left( {{\bf{a}} + (r - c){\bf{1}}} \right)\right) - r{{\bf{1}}^ \top }{\bf{K}}\left( {{\bf{a}} + (r - c){\bf{1}}} \right).
\end{equation}
Then, we evaluate its first-order optimality condition with respect to the reward $r$, which yields
\begin{equation}
\frac{{\partial \Pi }}{{\partial r}} = \mu \left( s{{\bf{1}}^ \top }{\bf{K1}} - 2t{\left( {{\bf{a}} + (r - c){\bf{1}}} \right)^ \top }{{\bf{K}}^2}{\bf{1}}\right)  - {{\bf{1}}^ \top }{\bf{K}}\left( {{\bf{a}} + (r - c){\bf{1}}} \right) - r{{\bf{1}}^ \top }{\bf{K1}} = 0.
\end{equation}
As a result, with simple steps, we obtain the optimal value of the uniform reward, which is represented by
\begin{equation}\label{Eq:8}
{r^*} = {\left( {2\mu t{{\bf{1}}^ \top }{{\bf{K}}^2}{\bf{1}} + 2{{\bf{1}}^ \top }{\bf{K1}}} \right)^{ - 1}}\left(\mu \left(s{{\bf{1}}^ \top }{\bf{K1}} - 2t{({\bf{a}} - c{\bf{1}})^ \top }{{\bf{K}}^2}{\bf{1}}\right) - {{\bf{1}}^ \top }{\bf{K}}\left( {{\bf{a}} - c{\bf{1}}} \right)\right) .
\end{equation}

Until now, we have obtained the optimal incentive mechanism in terms of uniform reward and discriminatory reward in closed-form solution with complete information, and hence validated the uniqueness of the Stackelberg equilibrium.

\section{Bayesian Stackelberg game theoretic analysis for socially-aware incentive mechanism with incomplete information}\label{Sec:Bayesian}
Recall from Section~\ref{Sec:Solution}, we assume that the MUs will truthfully report their personal information (type) to the CSP. This situation can happen when there exists a supervising entity in the market that is capable of monitoring, sharing and storing all behaviors to ensure that the MUs always report the correct information. However, without a supervising entity which is often the case in practice, the MU does not reveal private information (type) to the CSP because of the concern on privacy leakage or selfish behaviors. Therefore, the incomplete information scenario is more applicable to the real-world crowdsensing applications and address the incentive mechanism therein. In this section, we extend the analysis to the scenario where the social structure information (the social network effects) is not exactly known by the CSP and MUs. Thus, we formulate the incentive mechanism as a Bayesian Stackelberg game~\cite{duong2016stackelberg, chen2011optimal}, and evaluate the game equilibrium by defining and optimizing the expected utility of MUs and the expected revenue of the CSP.

\subsection{Problem formulation with social structure uncertainty}
In the model proposed in Section~\ref{Sec:Model}, the important social structure information may be uncertain or unknown by the decision makers, i.e., the CSP and MUs. Accordingly, this game can be modeled as a Bayesian game where the Bayesian analysis is adopted to predict the game outcome. In particular, the social relationship, i.e., social structure of each MU is private information and is considered as the type of the followers. Only its probability distribution is commonly known. Such distribution information can be obtained through, e.g., historical information or long-term learning.

The mobile social structure is represented by an interaction matrix, i.e., the adjacency matrix ${\mathbf{G}}$. As aforementioned, the element $g_{ij}$ denotes the strength of the influence of MU $j$ on MU $i$. Recall from Section~\ref{Sec:Model}, the utility of an MU can be expressed as follows:
\begin{equation}
{u_i}({x_i},{{\bf{x}}_{ - {{i}}}}, {\bf{r}}) =  {x_i} - \frac{1}{2}{x_i}^2 +\sum\limits_{j = 1}^N {{g_{ij}}{x_i}{x_j}}   + {r_i}{x_i} - {c}{x_i}.
\end{equation}

Note that we set $a_i = 1$ and $b_i = 1/2$ in order to concentrate on the social structure uncertainty. Moreover, without loss of generality, for all the social neighbours of MU $i$, i.e., $j \in {\cal N}_i$, $g_{ij} = \gamma > 0$, and $\gamma$ is a given social network effect coefficient. Thus, the above equation is rewritten as follows:
\begin{equation}
{u_i}({x_i},{{\bf{x}}_{ - {{i}}}}, {\bf{r}}) =  {x_i} - \frac{1}{2}{x_i}^2 +\gamma {x_i}\sum\limits_{j \in {\cal N}_i} {{x_j}}   + {r_i}{x_i} - {c}{x_i}.
\end{equation}
Therefore, the expected utility is expressed as follows:
\begin{equation}
{U_i}({x_i},{{\bf{x}}_{ - i}},{\bf{r}}) = {\mathbb E}\left[ {{u_i}({x_i},{{\bf{x}}_{ - i}},{\bf{r}})} \right] = {x_i} - \frac{1}{2}{x_i}^2 + \gamma {x_i}{\mathbb E}\left[ {\sum\limits_{j \in {\cal N}_i} {{x_j}} } \right] + {r_i}{x_i} - c{x_i}.
\end{equation}

The social structure leads to different in-degrees and out-degrees of MUs. The in-degree denotes the number of other MUs that a certain MU influences, the out-degree denotes the number of other MUs influencing this MU. Thus, the in-degree represents its influence and the out-degree represents its susceptibility. The distribution\footnote{Since the social ties/links are constructed by the in/out-degree information of MUs, the social ties/links are also treated as random variables in some sense. Note that $\gamma$ in the model is a given social network effect coefficient, which captures the strength of social ties/links. Although the value of $\gamma$ is given, the social ties/links follow certain probability distribution instead of being the constant value. It is also noteworthy that $\gamma$ can be treated as an approximate term instead of the accurate value. Nevertheless, the impacts of uncertainty of $\gamma$ can still be absorbed into the distribution of in/out-degree since they are interdependent.} of in-degree and out-degree captures the social network effects from the network interaction patterns~\cite{candogan2012optimal, bloch2013pricing, zhang2017optimal, fainmesser2016pricing, BelhajValue}. Note that the proposed model can still be applied to the asymmetric social ties, since we consider both the in-degree and out-degree distributions of each MU instead of the degree distribution. For example, an MU Alice has the social influence on another MU, but the latter may not have the social influence on Alice. The reason is that Alice may have different in-degree and out-degree.

The in-degree $l \in D$ and out-degree $k \in D$, where $D = \{0, 1, \ldots, k^{max}\}$ and $k^{max}$ denotes the maximum possible value. We define ${P}: D\to [0,1]$ and ${H} : D\to [0,1]$ as the probability distributions of out-degree and in-degree, respectively, and we have $\sum\limits_{k \in D} {P(k)}  = \sum\limits_{l \in D} {H(l)}  = 1$. Furthermore, we assume that two probability distributions are independent and their variances are denoted as ${\sigma _k}^2$ and ${\sigma _l}^2$, respectively. Due to consistency theory, we know $\sum\limits_{k \in D} {P(k)k} = \sum\limits_{l \in D} {H(l)l}=\overline k$, and thus $\overline k$ is referred to as the mean value of social network effects. %From~\cite{fainmesser2018cooperation},
Moreover, we have
\begin{equation}
{\mathbb E}\left[ {\sum\limits_{j \in {\cal N}_i}{{x_j}} } \right] = k_i \times {\rm Avg}({{\bf{x}}_{ - {{i}}}}),
\end{equation}
where ${\rm Avg}({{\bf{x}}_{ - {{i}}}}) = {\mathbb E}\left[ {{x_j}\left| j \in {\cal N}_i \right.} \right]$ denotes the average participation level of social neighbours of MU $i$.

In order to obtain the expression of ${\rm Avg}({{\bf{x}}_{ - {{i}}}})$, we employ the concept of ``\textit{Configuration Model}'' in \textit{Network Science}~\cite{newman2018networks} to model the random networks generated with only in-degree distribution. According to Configuration Model's property (See Chapter 12.2 in~\cite{newman2018networks}), to a user, the degree distribution of its randomly chosen neighbor is $\overline H (l) = {\frac{{H(l)l }}{{\sum\limits_{l' \in D} H(l')l' }}} $. In other words, a randomly selected social neighbours of MU $i$ has the in-degree distribution as $\overline H (l)$ and out-degree distribution as $P(k)$. Thus, by denoting the participation level of the MU with out-degree $k$ and in-degree $l$ as $x(k,l)$, we have~\cite{fainmesser2016pricing, BelhajValue}
\begin{equation}\label{Eq:BayesianAvg}
{\rm{Avg}}({{\bf{x}}_{ - i}}) = \sum\limits_{l \in D} {\bar H(l)\left( {\sum\limits_{k \in D} {P(k)x(k,l)} } \right)},
\end{equation}
where $\overline H (l) = {\frac{{H(l)l }}{{\sum\limits_{l' \in D} H(l')l' }}} $. Note that given ${\rm Avg}({{\bf{x}}_{ - {{i}}}})$, the participation level of MU $i$ only depends on the reward and its out-degree $k$. Thus, the final expected utility of MU $i$ is expressed as follows:
\begin{equation}\label{Eq:BayesainUtility}
{U_i}({x_i},{{\bf{x}}_{ - i}},{\bf{r}}, k_i) = (1 + {r_i} - c){x_i} - \frac{1}{2}{x_i}^2 + \gamma k_i{x_i}{\rm Avg}({{\bf{x}}_{ - i}}),
\end{equation}
and the type of the MU is its in-degree and out-degree, which is denoted as $(l, k)$.

Since only the distribution of the in-degree and out-degree information is known, instead of maximizing the revenue as defined in Eq.~(\ref{Eq:3}), the objective of the leader, i.e., the CSP, is to maximize its expected revenue, which is given as follows:
\begin{equation}\label{Eq:BayesianRevenue}
\Pi  = \sum\limits_{l \in D} {\left( {\sum\limits_{k \in D} {H(l)P(k)\left( {\left( {\mu s - r(k,l)} \right)x(k,l) - \mu t{{\left( {x(k,l)} \right)}^2}} \right)} } \right)},
\end{equation}
where $r(k,l)$ is the reward offered to the MU with out-degree $k$ and in-degree $l$.

\subsection{Stackelberg game equilibrium analysis}
We also adopt the backward induction to analyze the Bayesian Stackelberg game.

\textit{1) Follower game:}
For the given incentive or the reward determined by the CSP, we examine the Bayesain Nash equilibrium in the follower game which is characterized by the following theorem.

\begin{theorem}
The existence and uniqueness of the Bayesian Nash equilibrium of the follower game can be guaranteed, provided that the following condition
\begin{equation}
\gamma {k^{\max }} < 1
\end{equation}
holds.
\end{theorem}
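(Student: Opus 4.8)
The plan is to exploit the aggregative structure of the Bayesian follower game: by Eq.~\eqref{Eq:BayesainUtility} an MU of type $(k,l)$ is coupled to all the others only through the single scalar $\mathrm{Avg}(\mathbf{x}_{-i})$, which — by the configuration-model expression in Eq.~\eqref{Eq:BayesianAvg} (a random neighbour has out-degree law $P$ and in-degree law $\bar H(l)=H(l)l/\bar k$) — is in fact independent of $i$. First I would differentiate $U_i$ in $x_i$: since $\partial^2 U_i/\partial x_i^2 = -1 < 0$, the expected utility is strictly concave, so the unique best response of type $(k,l)$ is its stationary point truncated at zero,
\begin{equation*}
x(k,l) = \left(1 + r(k,l) - c + \gamma k\,\mathrm{Avg}(\mathbf{x}_{-i})\right)^+ .
\end{equation*}
A profile is a Bayesian Nash equilibrium if and only if the finite vector $\mathbf{x} = \bigl(x(k,l)\bigr)_{(k,l)\in D\times D}$ is a fixed point of the best-response map $T$ obtained by substituting $\mathrm{Avg}(\mathbf{x})$, computed from $\bar H$ and $P$, into the right-hand side above.

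Next I would show that $T$ is a contraction on the complete metric space $\bigl(\mathbb{R}_{\ge 0}^{|D|^2},\|\cdot\|_\infty\bigr)$. Using that $t\mapsto t^+$ is $1$-Lipschitz, for any two profiles $\mathbf{x},\mathbf{x}'$ and any type $(k,l)$,
\begin{equation*}
\left| (T\mathbf{x})(k,l) - (T\mathbf{x}')(k,l) \right| \le \gamma k \left| \mathrm{Avg}(\mathbf{x}) - \mathrm{Avg}(\mathbf{x}') \right|.
\end{equation*}
Because $\mathrm{Avg}(\cdot)$ is a convex combination of the profile entries (the weights $\bar H(l)P(k)$ are nonnegative and sum to one), one has $|\mathrm{Avg}(\mathbf{x}) - \mathrm{Avg}(\mathbf{x}')| \le \|\mathbf{x}-\mathbf{x}'\|_\infty$, hence $|(T\mathbf{x})(k,l) - (T\mathbf{x}')(k,l)| \le \gamma k\,\|\mathbf{x}-\mathbf{x}'\|_\infty \le \gamma k^{\max}\,\|\mathbf{x}-\mathbf{x}'\|_\infty$; taking the maximum over $(k,l)$ yields $\|T\mathbf{x} - T\mathbf{x}'\|_\infty \le \gamma k^{\max}\,\|\mathbf{x}-\mathbf{x}'\|_\infty$, and $\gamma k^{\max}<1$ makes $T$ a contraction, so the Banach fixed-point theorem delivers existence and uniqueness of the equilibrium. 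Equivalently, I could collapse everything to the scalar map $y\mapsto \Psi(y):=\sum_{l}\bar H(l)\sum_k P(k)\,(1+r(k,l)-c+\gamma k y)^+$, which is continuous, nondecreasing and $\gamma\bar k$-Lipschitz with $\gamma\bar k\le\gamma k^{\max}<1$; then $\Psi(y)-y$ is strictly decreasing from the nonnegative value $\Psi(0)$ to $-\infty$, giving a unique root $y^\star$ and then $x(k,l)=(1+r(k,l)-c+\gamma k y^\star)^+$.

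I expect the only real subtlety — more a matter of bookkeeping than an obstacle — to be making precise that $\mathrm{Avg}(\mathbf{x}_{-i})$ is a genuinely $i$-independent aggregate, which is exactly the configuration-model fact already invoked before Eq.~\eqref{Eq:BayesianAvg} and is what reduces the problem to a fixed point of a low-dimensional type-profile map; the remaining ingredients (strict concavity, the $1$-Lipschitz property of $(\cdot)^+$, completeness of $(\mathbb{R}_{\ge 0}^{|D|^2},\|\cdot\|_\infty)$) are routine. A concave-game route — bounding the equilibrium participation levels exactly as in the proof of Theorem~1 to get a compact convex strategy set, invoking a concave-game existence theorem, and then a Moulin/diagonal-dominance argument for uniqueness — would also work, but the contraction argument is the most direct and is precisely the one that makes the hypothesis $\gamma k^{\max}<1$ natural.
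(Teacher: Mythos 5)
Your proof is correct, and it takes a genuinely different route from the paper's. The paper does not argue from scratch: it verifies the hypotheses of two results of Glaeser and Scheinkman for games with an aggregate (average) interaction term --- a boundedness condition on $\partial U_i/\partial x_i$ for existence, and the ``moderate social influence'' ratio condition $\bigl|\tfrac{\partial^2 U_i}{\partial x_i\,\partial \mathrm{Avg}(\mathbf{x}_{-i})}\big/\tfrac{\partial^2 U_i}{\partial x_i^2}\bigr| = \gamma k_i \le \gamma k^{\max} < 1$ for uniqueness --- and then derives the closed-form equilibrium separately afterwards. You instead give a self-contained Banach fixed-point argument on the type-indexed best-response map (equivalently, on the scalar aggregate $y\mapsto\Psi(y)$), exploiting that $\mathrm{Avg}(\mathbf{x}_{-i})$ is an $i$-independent convex combination with weights $\bar H(l)P(k)$ and that $(\cdot)^+$ is $1$-Lipschitz; the contraction modulus $\gamma k^{\max}$ (indeed $\gamma\bar k$ in the scalar version) is exactly the paper's moderate-social-influence bound, so the hypothesis enters in the same place. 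Your route buys more: it is elementary and citation-free, it delivers existence and uniqueness simultaneously, it correctly handles the boundary case via the positive-part truncation (the paper implicitly assumes interior solutions), it produces the closed-form equilibrium of Eq.~(\ref{Eq:BayesianParticipation}) as a by-product of solving for the fixed point $y^\star$, and it immediately justifies convergence of the best-response iteration that the paper invokes for the modified Algorithm~1. What the paper's approach buys is brevity and an explicit connection to a known general framework. The one point worth making fully explicit in a final write-up is the one you flag: that in any equilibrium, the aggregate faced by every player is the same population average, so best responses depend only on the type $(k,l)$ and every equilibrium is necessarily symmetric in types --- this is what legitimizes collapsing the $N$-player Bayesian game to the $|D|^2$-dimensional (or scalar) fixed-point problem.
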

\begin{proof}
\textbf{The existence of Bayesain follower game:} To prove that there exists at least one Bayesian Nash equilibrium in the follower game (Proposition 1 in~\cite{glaeser2000non}), we need to ensure that the following condition
\begin{equation}\label{Eq:BayesainCondition}
\frac{{\partial {U_i}({\overline x},{{\bf{x}}_{ - i}},{\bf{r}},{k_i})}}{{\partial {x_i}}} \le 0, \forall i \in {\cal N}, k \in Z^+, r \in {\mathbb R^+}, \exists \overline x  \ge 0, \forall x \le \overline x
\end{equation}
holds, where $\overline x = {\rm{Avg}}({{\bf{x}}_{ - i}})$. Since we have
\begin{eqnarray}
\frac{{\partial {U_i}(\bar x,{{\bf{x}}_{ - i}},{\bf{r}},{k_i})}}{{\partial {x_i}}} &=& (1 + {r_i} - c) - \bar x + \gamma {k_i}{\rm{Avg}}({{\bf{x}}_{ - i}}) \nonumber \\ &\le& (1 + {r_i} - c) - \bar x + \gamma {k^{\max }}\bar x \nonumber \\ &=& 1 + {r_i} - c + (\gamma {k^{\max }} - 1)\bar x,
\end{eqnarray}
we can ensure that the condition in Eq.~(\ref{Eq:BayesainCondition}) holds provided that $\gamma {k^{\max }} < 1$ is satisfied.

\textbf{The uniqueness of Bayesain follower game:} The proof of the uniqueness of the pure Bayesian Nash equilibrium can be directly derived from~\cite{glaeser2000non}. In particular, the sufficient condition that implies there exists at most one Bayesian Nash equilibrium is given as follows (Proposition 3 in~\cite{glaeser2000non}):
\begin{equation}\label{Eq:BayesianUnique}
\left| {\frac{{{\partial ^2}{U_i}(\bar x,{{\bf{x}}_{ - i}},{\bf{r}},{k_i})}}{{\partial {x_i}\partial {\rm{Avg}}({{\bf{x}}_{ - i}})}}\bigg /\frac{{{\partial ^2}{U_i}(\bar x,{{\bf{x}}_{ - i}},{\bf{r}},{k_i})}}{{\partial {x_i}\partial {x_i}}}} \right| < 1, \forall i \in {\cal N}.
\end{equation}
With simple steps, we have $\left| {\frac{{{\partial ^2}{U_i}(\bar x,{{\bf{x}}_{ - i}},{\bf{r}},{k_i})}}{{\partial {x_i}\partial {\rm{Avg}}({{\bf{x}}_{ - i}})}}/\frac{{{\partial ^2}{U_i}(\bar x,{{\bf{x}}_{ - i}},{\bf{r}},{k_i})}}{{\partial {x_i}\partial {x_i}}}} \right|= \left| \gamma {k_i} \right| \le \left| \gamma {k^{\max }} \right|$. Thus, if $\gamma {k^{\max }} < 1$ holds, the condition given in Eq.~(\ref{Eq:BayesianUnique}) is guaranteed. The proof is then completed.
\end{proof}

To obtain the closed-form expression of the unique Bayesian Nash equilibrium point in the follower game, we first apply partial derivative of the expected utility given in Eq.~(\ref{Eq:BayesainUtility}), i.e., $\frac{{\partial {U_i}(\bar x,{{\bf{x}}_{ - i}},{\bf{r}},{k_i})}}{{\partial {x_i}}} = 0$, as shown as follows:
\begin{equation}
x_i^* = 1 + {r_i} - c + \gamma {k_i}{{{\mathbb E}}}\left[ x_j {\left| j \in {\cal N}_i \right.}\right].
\end{equation}
Thus, we have
\begin{equation}
x(k,l) = 1 + r(k,l) - c + \gamma k{\mathbb E}\left[ {x(k,l)\left| {(k,l) \in {D^2}} \right.} \right].
\end{equation}
From Eq.~(\ref{Eq:BayesianAvg}), we have
\begin{eqnarray}\label{Eq:Analytical}
&&{\mathbb E}\left[ {x(k',l')\left| {(k',l') \in {D^2}} \right.} \right]=  \sum\limits_{l' \in D} {\overline H (l')\sum\limits_{k' \in D} {P(k')x(k',l')}} \nonumber \\ &=& \sum\limits_{l' \in D} {\left( {\overline H (l')\sum\limits_{k' \in D} {\left( {P(k')\left( {1 + r(k',l') - c + \gamma k'{\mathbb E}\left[ {x(k'',l'')\left| {(k'',l'') \in {D^2}} \right.} \right]} \right)} \right)} } \right)} \nonumber \\ &=& 1 + \overline r  - c + \gamma \overline k {\mathbb E}\left[ {x(k'',l'')\left| {(k'',l'') \in {D^2}} \right.} \right],
\end{eqnarray}
where $\overline r = \sum\limits_{l \in D} {\overline H (l)} \sum\limits_{k \in D} {P(k)} r(k,l)$ and $\overline k = \sum\limits_{l \in D} {\overline H (l)} \sum\limits_{k \in D} {P(k)} k = \sum\limits_{k \in D} {P(k)} k$. Since we also have
\begin{equation}
{\mathbb E}\left[ {x(k',l')\left| {(k',l') \in {D^2}} \right.} \right] = {\mathbb E}\left[ {x(k'',l'')\left| {(k'',l'') \in {D^2}} \right.} \right],
\end{equation}
it can be concluded from Eq.~(\ref{Eq:Analytical}) with the following expression
\begin{equation}
{\rm{Avg}}({{\bf{x}}_{ - i}}) = {\mathbb E}\left[ {{x_j}\left| {j \in {N_i}} \right.} \right] = \frac{{1 + \overline r  - c}}{{1 - \gamma \overline k }}.
\end{equation}
Therefore, we obtain the closed-form expression of the participation level of the MU with type $(k,l)$ in the Bayesian follower game, which is given as follows:
\begin{equation}\label{Eq:BayesianParticipation}
x^*(k,l) = 1 + r(k,l) - c + \gamma k\frac{{1 + \overline r  - c}}{{1 - \gamma \overline k }}.
\end{equation}

Note that Algorithm~1 can be implemented similarly in incomplete information scenario. The only difference is that the best-response function update policy in the $4$th line of Algorithm 1 is replaced by another update policy obtained from Eq.~(19). Since we have validated the existence and uniqueness of the Bayesian Nash equilibrium, the modified Algorithm 1 can achieve the Bayesian Nash equilibrium~\cite{han2012game}. Similar to that in complete information scenario, this Bayesian Nash equilibrium is also the approximate equilibrium due to the error $\epsilon$.

\textit{2) Leader game:}
As the CSP has the information on the degree distributions of MUs but has no information on MUs' type, and thus can offer only a uniform reward\footnote{Note that the uniform incentive mechanism is more applicable in incomplete information scenario, where the CSP has no information on the specific type of each individual MU. However, the in/out-degree distributions of MUs can be obtained through, e.g., historical information or long-term learning, which makes the uniform incentive mechanism feasible. This can also be confirmed by the closed-form solution for the optimal uniform reward, since the expression of the optimal reward only includes the mean and variance of the in/out-degree distributions instead of the type of individual MU.}, i.e., $r(k, l) = r$ for all MUs. The optimal incentive mechanism obtained from the leader game is characterized by the following theorem.

\begin{theorem}
The optimal reward offered by the CSP in the Bayesian Stackelberg game is unique, which is given as follows:
\begin{equation}
{r^*} = c - 1 + \frac{{\left( {\mu s + 1 - c} \right)\left( {1 - \gamma \overline k } \right)}}{{2\left( {1 - \gamma \overline k  + \mu t + \mu t{\gamma ^2}{\sigma _k}^2} \right)}}.
\end{equation}
\end{theorem}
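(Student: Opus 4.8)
The plan is to execute the leader (Stage~I) step of the backward induction: substitute the closed-form Bayesian Nash equilibrium of the follower game, Eq.~(\ref{Eq:BayesianParticipation}), into the CSP's expected revenue, Eq.~(\ref{Eq:BayesianRevenue}), and then maximize the resulting single-variable function over the uniform reward $r$. Since the CSP can offer only a uniform reward, we set $r(k,l)=r$, so that $\overline r = r$ and Eq.~(\ref{Eq:BayesianParticipation}) collapses to
\[
x^*(k,l) = (1+r-c)\left(1 + \frac{\gamma k}{1-\gamma\overline k}\right) = (1+r-c)\,\frac{1-\gamma\overline k+\gamma k}{1-\gamma\overline k},
\]
which depends on the MU type only through the out-degree $k$; the in-degree $l$ drops out entirely at equilibrium. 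Writing $A_k := (1-\gamma\overline k+\gamma k)/(1-\gamma\overline k)$ and using independence of the two degree distributions, the sum over $l$ in Eq.~(\ref{Eq:BayesianRevenue}) contributes only $\sum_{l\in D}H(l)=1$, leaving
\[
\Pi(r) = (\mu s - r)(1+r-c)\sum_{k\in D}P(k)A_k - \mu t(1+r-c)^2\sum_{k\in D}P(k)A_k^2 .
\]

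The core computation is the evaluation of these two moments. From $\sum_{k\in D}P(k)k=\overline k$ we get $\sum_{k\in D}P(k)A_k = 1/(1-\gamma\overline k)$. For the second moment, expanding $(1-\gamma\overline k+\gamma k)^2$ and using $\sum_{k\in D}P(k)k^2 = \sigma_k^2+\overline k^2$, the non-variance terms telescope through the identity $[(1-\gamma\overline k)+\gamma\overline k]^2 = 1$, giving $\sum_{k\in D}P(k)A_k^2 = (1+\gamma^2\sigma_k^2)/(1-\gamma\overline k)^2$. I expect this cancellation to be the main (if purely algebraic) step: one must see that only the variance $\sigma_k^2$ of the \emph{out}-degree distribution survives, because the equilibrium effort is affine in $k$ while the $l$-dependence has already been averaged out. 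Substituting back and writing $w := 1+r-c$,
\[
\Pi = \frac{(\mu s + 1 - c)\,w - w^2}{1-\gamma\overline k} - \frac{\mu t(1+\gamma^2\sigma_k^2)\,w^2}{(1-\gamma\overline k)^2}.
\]

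It remains to optimize and to establish uniqueness. The standing condition $\gamma k^{\max}<1$ yields $\gamma\overline k\le\gamma k^{\max}<1$, so $1-\gamma\overline k>0$; together with $\mu t>0$ this makes the coefficient of $w^2$ strictly negative, so $\Pi$ is a strictly concave quadratic in $w$ (hence in $r$) and therefore has a unique maximizer. Setting $d\Pi/dw=0$ and clearing the denominator $(1-\gamma\overline k)^2$ gives
\[
(\mu s + 1 - c)(1-\gamma\overline k) = 2w\left(1-\gamma\overline k + \mu t + \mu t\gamma^2\sigma_k^2\right),
\]
and solving for $w$ and then $r^* = w - 1 + c$ produces exactly the claimed expression, with $\mu s + 1 - c>0$ ensuring the interior optimum keeps all participation levels positive. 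Strict concavity delivers the uniqueness claim, completing the proof.
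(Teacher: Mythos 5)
Your proof is correct and follows essentially the same route as the paper's: substitute the uniform-reward Bayesian equilibrium into the expected revenue, use $\sum_{k}P(k)k^2=\overline k^2+\sigma_k^2$ to reduce $\Pi$ to a quadratic in $w=1+r-c$, and solve the first-order condition. Your moment computation via $A_k$ is tidier than the paper's term-by-term expansion, and your explicit strict-concavity check (using $1-\gamma\overline k>0$ and $\mu t>0$), which the paper leaves implicit, is what actually justifies the uniqueness claim.
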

\begin{proof}
Please refer to the appendix for the details.
\end{proof}

Furthermore, we study the benchmark case where the CSP knows both the in-degree and out-degree of any individual follower. In such a situation, the CSP is able to offer discriminatory reward as incentive, $r(k,l)$ for the MU with out-degree $k$ and in-degree $l$. Then, the revenue maximization problem faced by the CSP is characterized in the following theorem.
\begin{theorem}
Provided that the CSP clearly knows the type of each individual MU, the optimal discriminatory reward $r(k,l)$ offered to the MU with out-degree $k$ and in-degree $l$, is unique.
\end{theorem}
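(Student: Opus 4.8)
The plan is to follow backward induction, as elsewhere in the paper. \textbf{Step 1: reduce to a finite-dimensional quadratic program.} By the uniqueness result just established for the Bayesian follower game, a type-$(k,l)$ MU plays $x^*(k,l) = 1 + r(k,l) - c + \gamma k\,\frac{1+\bar r-c}{1-\gamma\bar k}$ as in Eq.~(\ref{Eq:BayesianParticipation}), where $\bar r = \sum_{l\in D}\bar H(l)\sum_{k\in D}P(k)r(k,l)$ is a fixed probability-weighted linear functional of the discriminatory reward profile $\mathbf r = (r(k,l))_{(k,l)\in D^2}$. Since $\gamma\bar k\le \gamma k^{\max} < 1$, the constant $\theta := \gamma/(1-\gamma\bar k)$ is positive and $x^*(k,l) = (1-c)(1+\theta k) + r(k,l) + \theta k\,\bar r$ is affine in $\mathbf r$. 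Substituting this into the CSP's expected revenue Eq.~(\ref{Eq:BayesianRevenue}) makes $\Pi$ a quadratic function of the vector $\mathbf r$, so uniqueness of the optimal discriminatory reward amounts to showing this quadratic has a unique maximizer.

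\textbf{Step 2: the first-order conditions collapse to a $2\times 2$ aggregate system.} I would set $\partial\Pi/\partial r(k,l) = 0$ for every type with $H(l)P(k) > 0$ (rewards for zero-probability types are irrelevant), using $\partial\bar r/\partial r(k,l) = \bar H(l)P(k)$ and $\partial x^*(k',l')/\partial r(k,l) = \delta_{(k',l'),(k,l)} + \theta k'\bar H(l)P(k)$. Dividing out $H(l)P(k)$ and using $\bar H(l)/H(l) = l/\bar k$, each stationarity equation becomes linear and writes $r(k,l)$ as an affine function of $k$, $l$ and just two aggregate scalars: $\bar r$ itself and $C := \sum_{k',l'}H(l')P(k')k'\big(\mu s - r(k',l') - 2\mu t\,x^*(k',l')\big)$. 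Plugging this expression back into the definitions of $\bar r$ and $C$ and using $\sum_k P(k)k^2 = \bar k^2 + \sigma_k^2$ and $\sum_l H(l)l^2 = \bar k^2 + \sigma_l^2$ yields a $2\times 2$ linear system in $(\bar r, C)$; I would show its determinant --- a polynomial in $\bar k$, $\sigma_k^2$, $\sigma_l^2$, $\theta$ and $\mu t$ --- is nonzero whenever $\gamma k^{\max} < 1$, so $(\bar r, C)$, and hence $r^*(k,l)$ for every $(k,l)$, is uniquely determined (and, as in the uniform case above, obtained in closed form).

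\textbf{Step 3: global optimality.} To certify that this stationary profile is the unique maximizer rather than a saddle, I would show the Hessian of $\Pi$ in $\mathbf r$ is negative definite. A direct computation gives
\[
\nabla_{\mathbf r}^2\Pi = -2(1+\mu t)\,D - (1+2\mu t)\,\theta\,(uv^\top + vu^\top) - 2\mu t\,\theta^2(\bar k^2 + \sigma_k^2)\,vv^\top,
\]
with $D = \mathrm{diag}\big(H(l)P(k)\big)\succ 0$, $u_{(k,l)} = H(l)P(k)k$ and $v_{(k,l)} = \bar H(l)P(k)$. Writing $u = Dw$ and $v = D\tilde w$ with $w_{(k,l)} = k$ and $\tilde w_{(k,l)} = l/\bar k$, the quadratic form on a test vector $z$ becomes $-2(1+\mu t)\|z\|_D^2 - 2(1+2\mu t)\theta\langle z,w\rangle_D\langle z,\tilde w\rangle_D - 2\mu t\theta^2(\bar k^2+\sigma_k^2)\langle z,\tilde w\rangle_D^2$, and I would bound this strictly below zero using Cauchy--Schwarz restricted to the two-dimensional span of $w$ and $\tilde w$ in the $D$-inner product, together with $\theta\bar k = \gamma\bar k/(1-\gamma\bar k)$ being finite by $\gamma k^{\max} < 1$.

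\textbf{Main obstacle.} The crux is Step 3 --- equivalently, the nonsingularity claim in Step 2: the social-network coupling through $\bar r$ injects a rank-at-most-two perturbation into the otherwise benign diagonal Hessian, and ruling out a sign flip is delicate because a crude Cauchy--Schwarz estimate of the cross term $\langle z,w\rangle_D\langle z,\tilde w\rangle_D$ loses a factor. One must exploit that the functionals $z\mapsto\langle z,w\rangle_D$ and $z\mapsto\langle z,\tilde w\rangle_D$ cannot both be extremal for the same $z$ (the $2\times 2$ Gram matrix of $w$ and $\tilde w$ in the $D$-inner product enters), and it is precisely here that the hypothesis $\gamma k^{\max} < 1$ does the work. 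Everything else is routine differentiation and linear algebra.
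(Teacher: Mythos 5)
Your proposal follows essentially the same route as the paper's proof: substitute the unique Bayesian follower equilibrium into the expected revenue, take first-order conditions in each $r(k,l)$, observe that every optimal reward is affine in two scalar aggregates (the paper uses $\bar r$ and $\psi = \sum_{l\in D}\sum_{k\in D}\frac{k}{\bar k}H(l)P(k)\,r(k,l)$, which is linearly equivalent to your pair $(\bar r, C)$), and close these into a $2\times 2$ linear system whose unique solution pins down $r^*(k,l)$. If anything, your Steps 2--3 are more demanding than what the paper actually does --- it simply asserts that the $2\times 2$ system is full rank and never checks second-order conditions --- so the nonsingularity/negative-definiteness verification you flag as the ``main obstacle'' is precisely the point the published argument glosses over rather than a defect of your plan.
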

\begin{proof}
Please refer to the appendix for the details.
\end{proof}

%\begin{figure}[t]
%\centering
%\includegraphics[width=.4\textwidth]{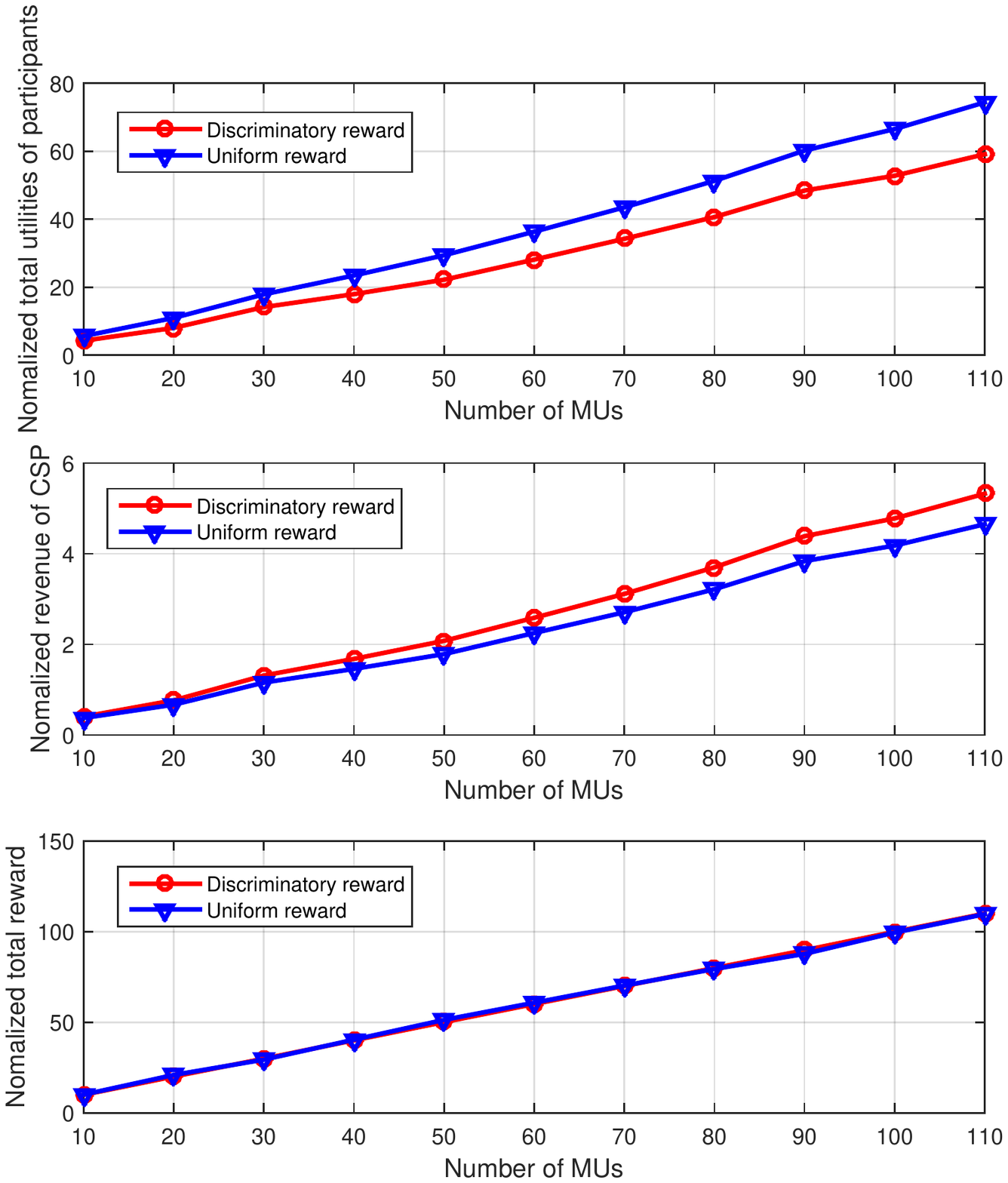}
%\caption{The impact of total number of MUs on the crowdsensing service provider and mobile participants.}\label{Fig:number}
%\end{figure}
%\begin{figure}[t]
%\centering
%\includegraphics[width=.4\textwidth]{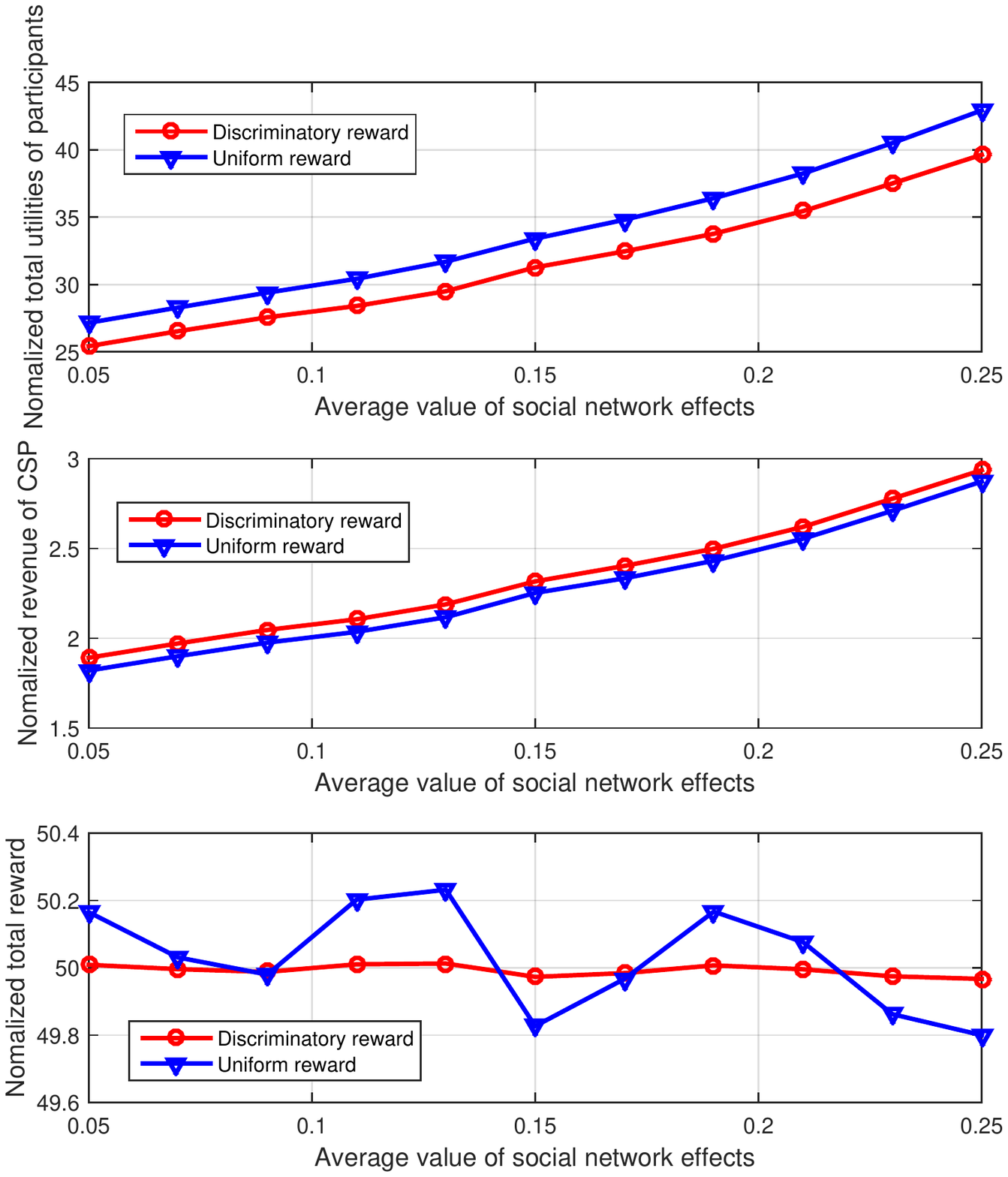}
%\caption{The impact of average value of social network effects on the crowdsensing service provider and mobile participants.}\label{Fig:socialtie}
%\end{figure}
\section{Performance evaluation}\label{Sec:Simulation}
In this section, we evaluate the performance of the proposed socially-aware incentive mechanisms of the CSP in crowdsensing applications, and investigate the impacts of different parameters of mobile networks on the performance.
\begin{figure}[t]
\centering
\begin{minipage}[t]{0.48\textwidth}
\centering
\includegraphics[width=6cm]{number.pdf}
\caption{The impact of total number of MUs on the crowdsensing service provider and mobile participants.}\label{Fig:number}
\end{minipage} \quad
\begin{minipage}[t]{0.48\textwidth}
\centering
\includegraphics[width=6cm]{socialtie.pdf}
\caption{The impact of average value of social network effects on the crowdsensing service provider and mobile participants.}\label{Fig:socialtie}
\end{minipage}
%\vspace*{-8mm}
\end{figure}
\subsection{Investigation on Stackelberg game with complete information on social structure}
We consider a group of $N$ MUs, i.e., mobile participants, in a social network and set the parameters as follows. We assume the intrinsic parameters of MUs, i.e., $a_i$ and $b_i$ follow the normal distribution $\mathcal{N}(\mu_a, 2)$ and $\mathcal{N}(\mu_b,2)$. In addition, the social tie $g_{ij}$ between any two users $i$ and $j$ follows a normal distribution $\mathcal{N}(\mu_g, 1)$. The default parameters are set as: $c=15$, $\mu=0.1$, $s=20$, $t=0.05$, $\mu_a=\mu_b=15$, $\mu_g=0.05$ and $N=50$. Note that some of these parameters are varied according to the evaluation scenarios. As expected and verified in Fig.~\ref{Fig:number} and Fig.~\ref{Fig:socialtie}, the discriminatory incentive yields the larger revenue for the CSP, compared with the uniform incentive. Intuitively, the reason is that the CSP can adjust the reward according to individual MU's effort and contribution, which is proven by Fig.~\ref{Fig:index}. %Meanwhile, setting the lower bound reward under the incomplete information scenario does not sacrifice too much profit, compared with the uniform incentive mechanism, especially when the number of MUs is larger and the social network effects are stronger.

We next evaluate the impact of the total number of MUs on the proposed incentive mechanisms, as illustrated in Fig.~\ref{Fig:number}. As the number of MUs increases, the total utilities of participants and the revenue of the CSP also increase under both mechanisms. %The marginal increase of the total utilities of participants and the revenue of the CSP are also greater, as the total number of MUs is greater.
The reason is that when the total number of MUs increases, the number of social neighboring MUs also increases. Owing to the underlying social network effects, the MUs are motivated by their social neighbours to have higher participation levels, and the revenue of the CSP is improved accordingly. In addition, with the increase of total number of participants, the total offered reward increases since the CSP tends to encourage more MUs to participate, in order to attain a greater revenue gain. In particular, the discriminatory and uniform incentive mechanisms enable the CSP to reduce the reward paid to the MUs, i.e., the cost, and therefore achieve a greater revenue gain in turn. Figure~\ref{Fig:socialtie} depicts the impact of average value of social network effects on two entities of this network, i.e., the CSP and MUs. We observe that as the social network effects becomes stronger, the total utilities of participants and the revenue of the CSP also increase. Since when the strength of social tie is stronger, the additional benefits obtained from social network effects are greater. In other words, the socially-aware MUs are motivated by each other and have higher participation levels consequently. When the participation levels are high enough, the CSP is able to offer less reward to save money. In turn, the total utilities of participants and the revenue of the CSP are improved. Furthermore, we observe that the total offered reward under the uniform incentive mechanism and the discriminatory incentive mechanism have no big difference from both Figs.~\ref{Fig:number} and~\ref{Fig:socialtie}. The reason is that the CSP under the discriminatory incentive mechanism is able to achieve a balanced reward allocation with the similar cost. For example, the CSP can offer more reward to some MUs and less reward to some other MUs, which leads to a greater overall participation level. This intuition is demonstrated in Fig.~\ref{Fig:index}. From the third sub-figure in Fig.~3, we find that the uniform reward curve has several fluctuations suffering from the randomness of the social tie $g_{ij}$ when network effects become stronger.
%This is due to the fact that the social tie $g_{ij}$ follows certain probability distribution. Due to the randomness, the uniform reward curve fluctuates slightly.
Nevertheless, we can observe that the uniform reward still remains largely unchanged (around 50). This is different from the third sub-figure in Fig.~2, where we cannot observe the fluctuations. The reason is that Fig.~2 illustrates the impacts of the number of MUs on total offered reward from the CSP. Intuitively, the total reward increases when the number of participants increases. Thus, the slight fluctuations cannot be observed in Fig.~2 since the reward keeps increasing.

\begin{figure}[t]
\centering
\includegraphics[width=.6\textwidth]{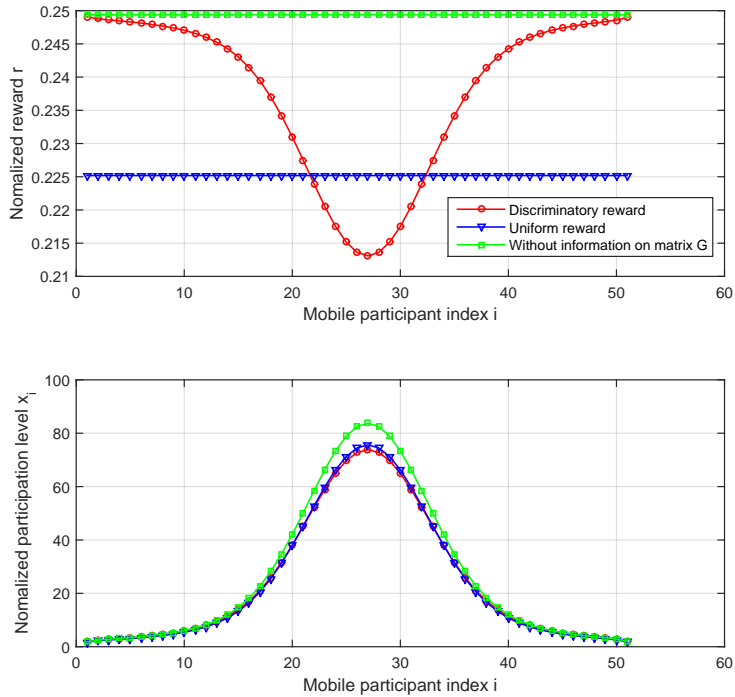}
\caption{A case illustration of distribution of normalized reward and participation level.}\label{Fig:index}
\vspace*{-8mm}
\end{figure}

Then, to explore the impacts of social network effects on each specific participant, we investigate the optimal reward and resulting MUs' participation level with the number of $50$ MUs, and we adopt the similar default parameters setting to that in above discussions. %$N=51$, $b=0.1$, $a=15$, $c=15$, $\mu=0.01$, $s=50$, $t=0.05$.
The adjacency matrix $G$ is generated as follows:
\begin{equation}\label{Eq:simulation}
\begin{cases}
   {g_{i,i + 1}} = 0.2\times\left(0.5 - {{\left(0.5 - \frac{{i - 1}}{N}\right)}^2}\right), &\mbox{$i \in [1,N - 1]$};\\
   {g_{i + 1,i}} = 0.2\times\left(0.5 - {{\left(0.5 - \frac{{i - 1}}{N}\right)}^2}\right), &\mbox{$i \in [1,N - 1]$};\\
   {g_{i,j}} = 0,&\mbox{otherwise}.
\end{cases}
\end{equation}

From Eq.~(\ref{Eq:simulation}), only participants who are adjacent in participant indexes (neighbours) can affect each other. From Fig.~\ref{Fig:index}, we observe that the CSP offers each participant the same reward when it has no information about the value of matrix $\bf G$. Given the reward from the CSP, the MUs have different participation level equilibrium as shown in Fig.~\ref{Fig:index}, where we observe that the participation levels of the MUs are socially related to each other. In particular, the $27$th MU is the most susceptible or influenced one in this network because it has the highest participation level given the same reward. On the contrary, the $1$st and the $51$st MUs are the most influential ones. Therefore, with the knowledge about the value of  matrix $\bf G$, the CSP is likely to offer more reward to the $1$st and the $51$st MUs and less to the $27$th MU, under the discriminatory incentive mechanism. The reason is that the CSP tends to have the highest participation level from the participant with the lowest cost and thus have a greater revenue gain. However, under the uniform incentive mechanism, the CSP can offer only the same reward to the participants. As such, the CSP usually offers more reward and promotes the participants to attain higher participation level, but the incurred extra cost is also very high, which decreases the revenue of the CSP consequently.
\begin{figure}[t]
\centering
\includegraphics[width=.6\textwidth]{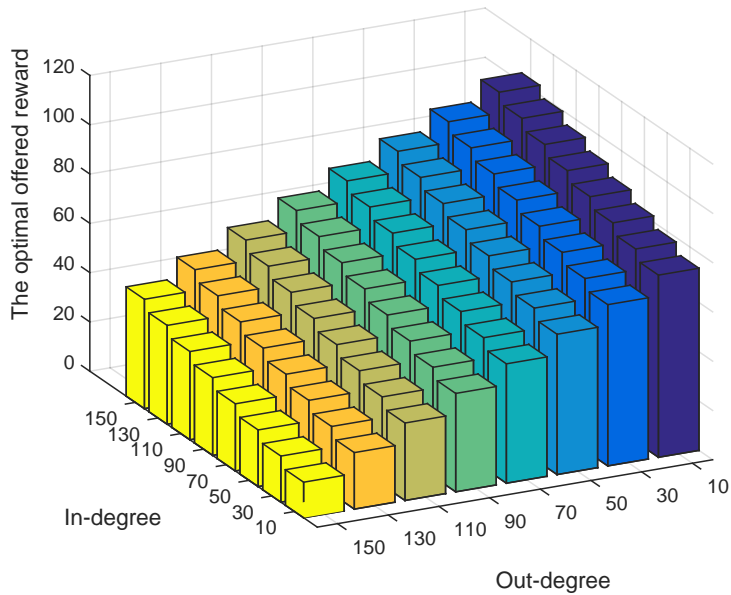}
\caption{The illustration of the optimal offered reward with respect to different in-degrees and out-degrees.}\label{Fig:rewardillustration}
%\vspace*{-8mm}
\end{figure}
\subsection{Investigation on Bayesian Stackelberg game with incomplete information on social structure}
Similar to the above discussions, we consider a group of $N$ MUs. The in-degree and out-degree of MUs follow the normal distribution ${\cal N}(\overline k, \sigma_k^2 )$ and ${\cal N}(\overline k, \sigma_l^2)$, respectively. The parameters are set as follows: $\gamma = 0.01$, $\overline k = 20$, $\sigma_k^2 = \sigma_l^2 = 10$, $\mu = 10$, $s = 20$, $t = 0.05$, $c=15$, and $N = 100$.

We first study the optimal offered reward in terms of different in-degrees and out-degrees, as illustrated in Fig.~\ref{Fig:rewardillustration}. Interestingly, we find that the optimal offered reward increases with the increase of in-degree, and increases with the decrease of out-degree. Recall that the MU's in-degree represents its influence and the out-degree represents its susceptibility. As the in-degree of the MU increases, this MU can encourage more other MUs due to the underlying social network effects. In a crowdsensing-based road traffic information sharing platform, we can treat the drivers in critical central paths as the MUs with higher in-degree, i.e., greater influence. The road information from these drivers plays a great role, i.e., the participation of these drivers can greatly promote the participation of others. Thus, in the presence of social network effects, the CSP tends to offer more reward to the MUs with the higher in-degree, since they potentially motivate more participation level of other MUs. On the contrary, for the MUs with the higher out-degree, the CSP has no incentive to offer more reward. The reason is that the MUs with the higher out-degree are more susceptible, and these MUs are potentially positively affected by others. Consequently, the CSP is able to offer less reward to save the cost. %Moreover, the above explanations are also supported by the Figs.~\ref{Fig:AverageNetworkEffects} and~\ref{Fig:VarianceNetworkEffects}.

\begin{figure}
 \begin{subfigure}[b]{0.32\textwidth}
 \centering
 \includegraphics[width=\columnwidth]{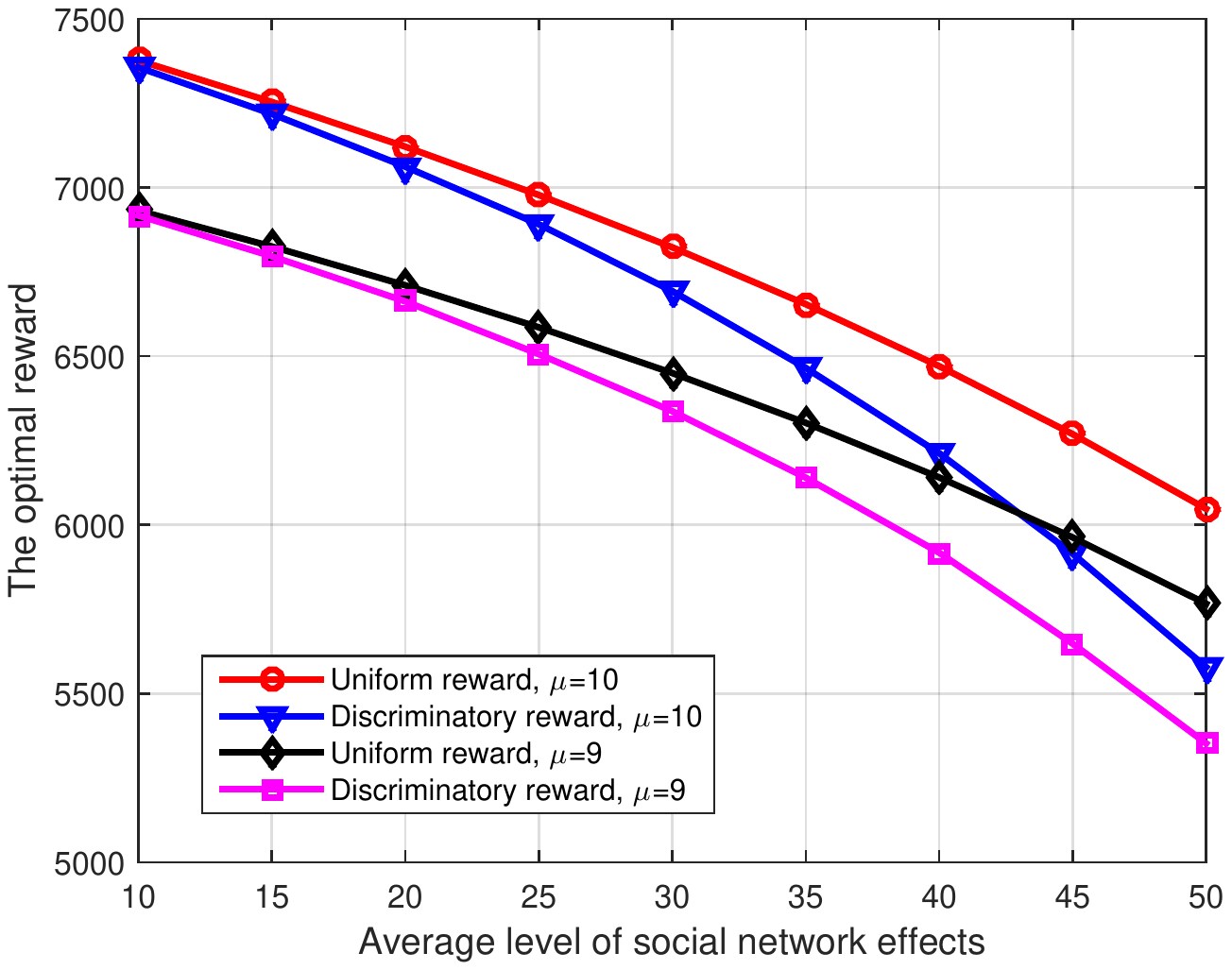}
 \caption{Offered reward}
 \label{Fig:reward_k}
 \end{subfigure}
 \begin{subfigure}[b]{0.32\textwidth}
 \centering
 \includegraphics[width=\columnwidth]{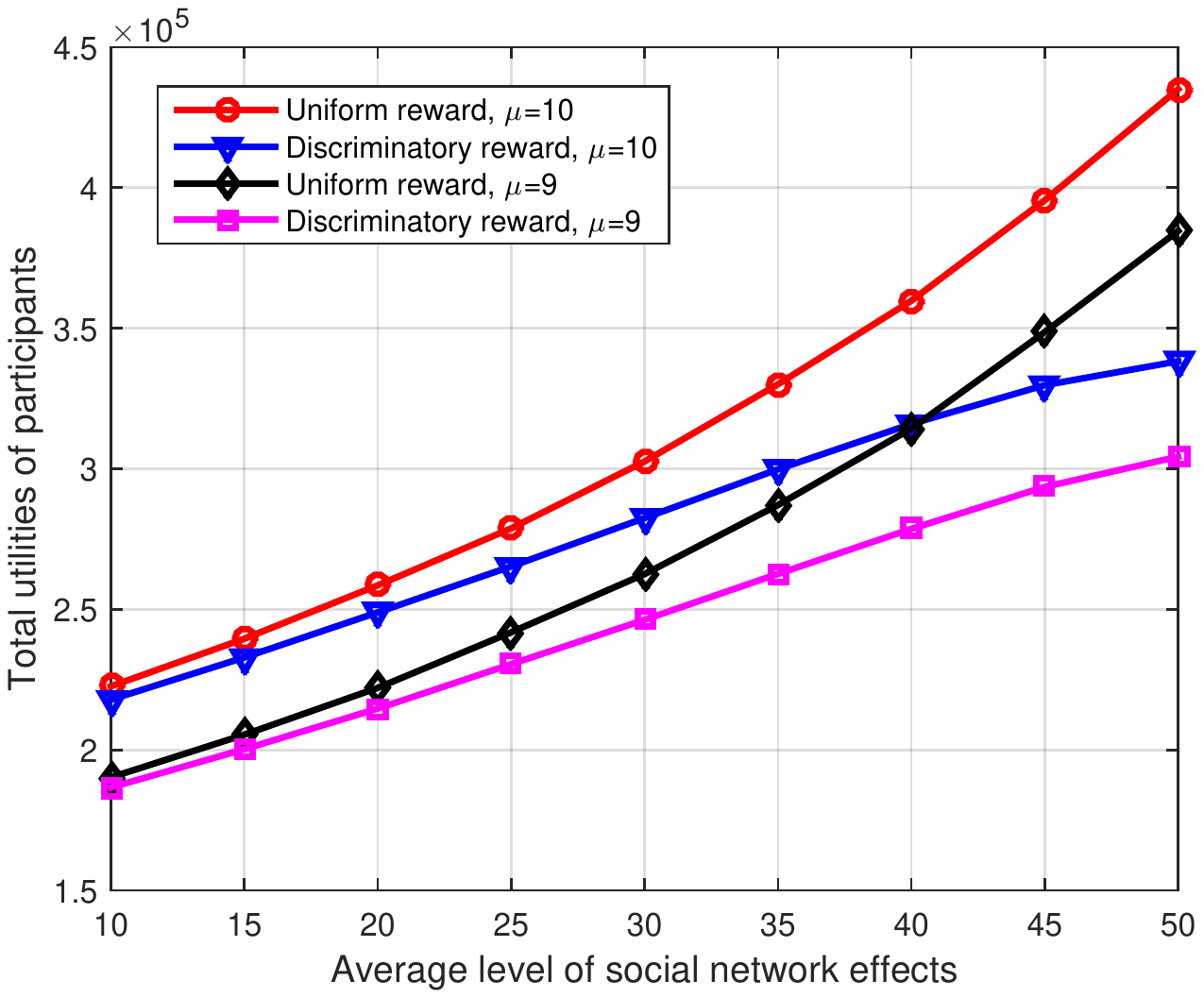}
 \caption{Total utilities of participants}
 \label{Fig:utility_k}
 \end{subfigure}
\begin{subfigure}[b]{0.32\textwidth}
\centering
\includegraphics[width=\columnwidth]{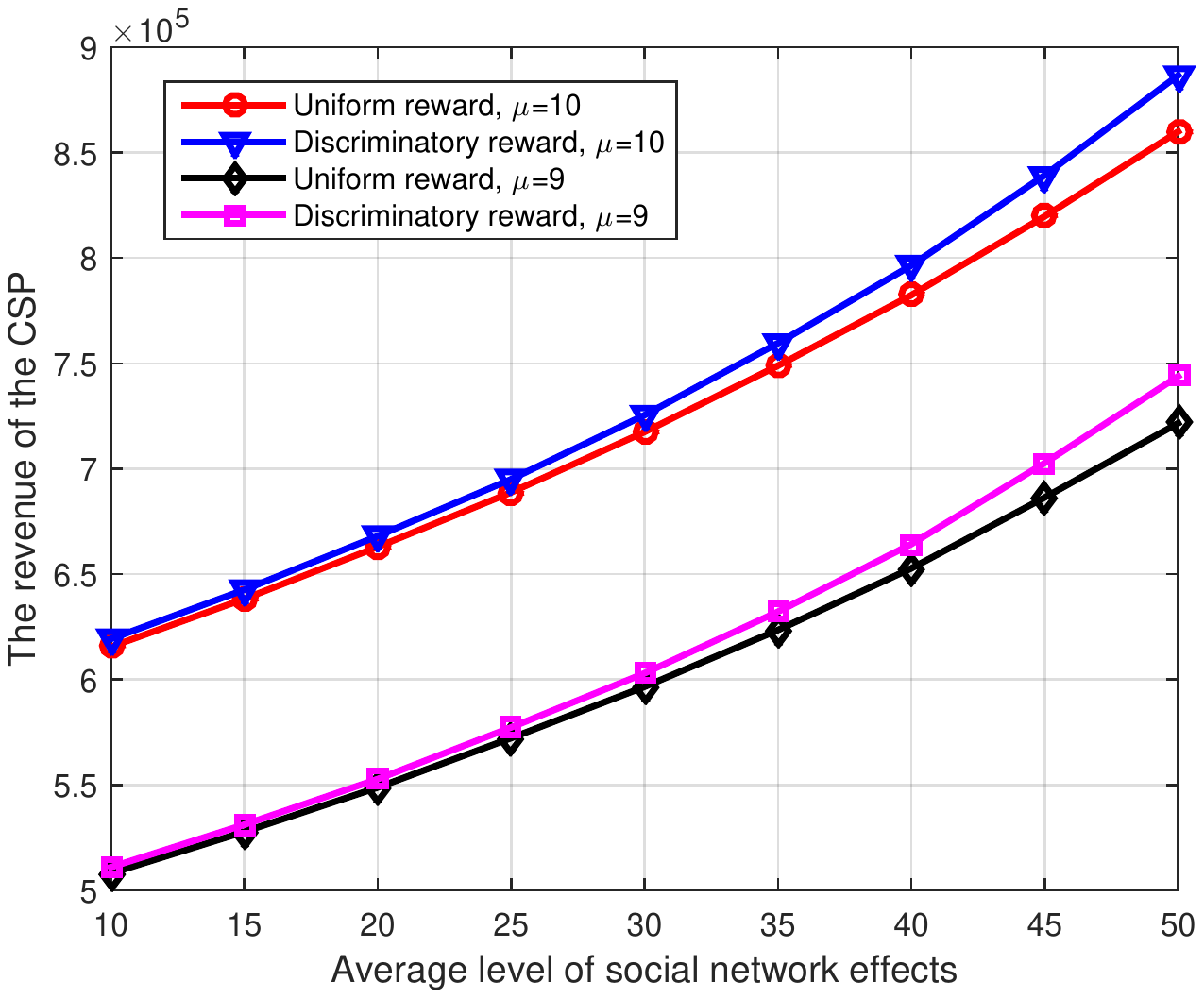}
 \caption{Revenue of the CSP}
 \label{Fig:revenue_k}
 \end{subfigure}
\centering
\caption{The impacts of mean value of social network effects.}
\label{Fig:AverageNetworkEffects}
%\vspace{-0.2cm}
\end{figure}

\begin{figure}
 \begin{subfigure}[b]{0.32\textwidth}
 \centering
 \includegraphics[width=\columnwidth]{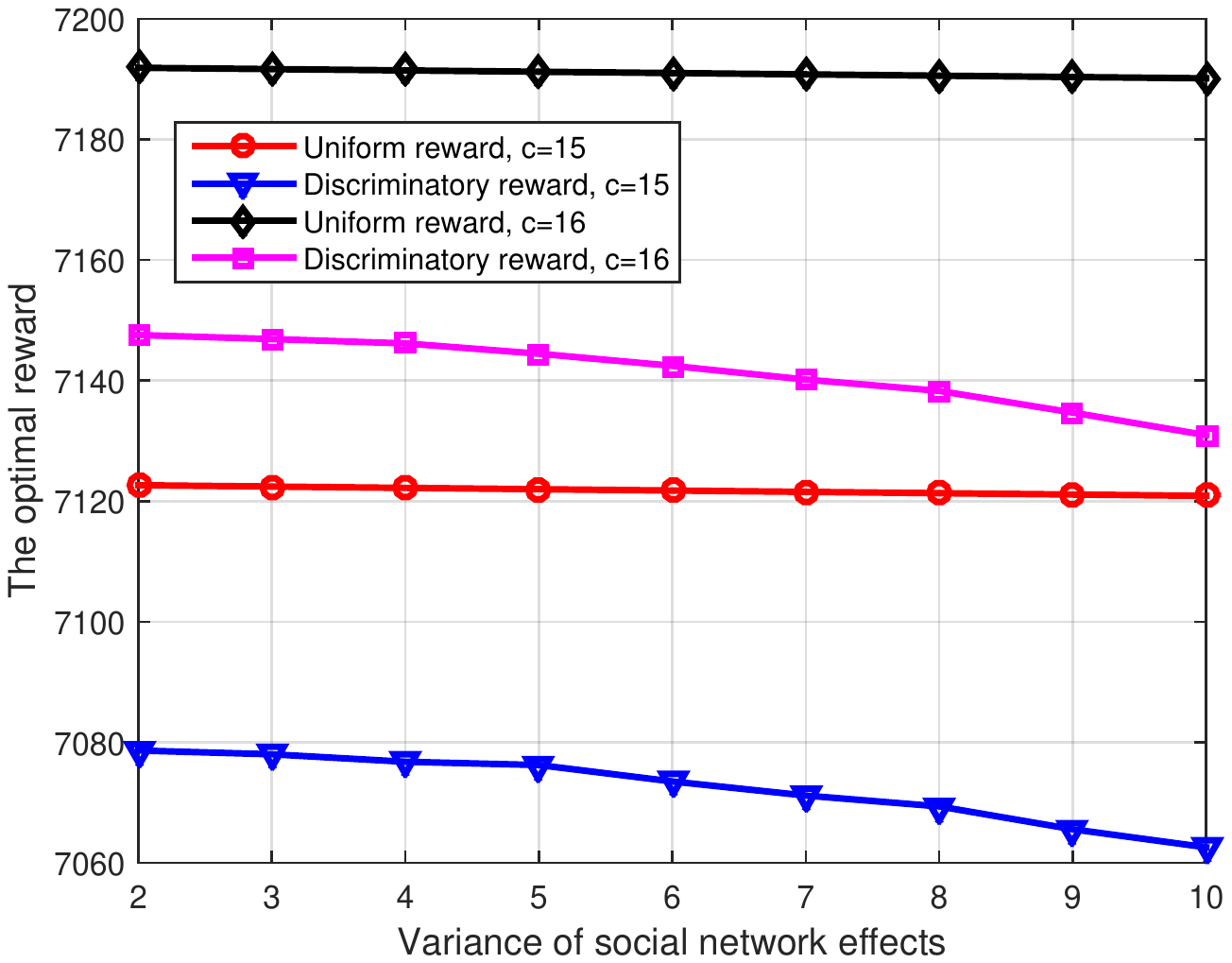}
 \caption{Offered reward}
 \label{fig:reward_cegma}
 \end{subfigure}
 \begin{subfigure}[b]{0.32\textwidth}
 \centering
 \includegraphics[width=\columnwidth]{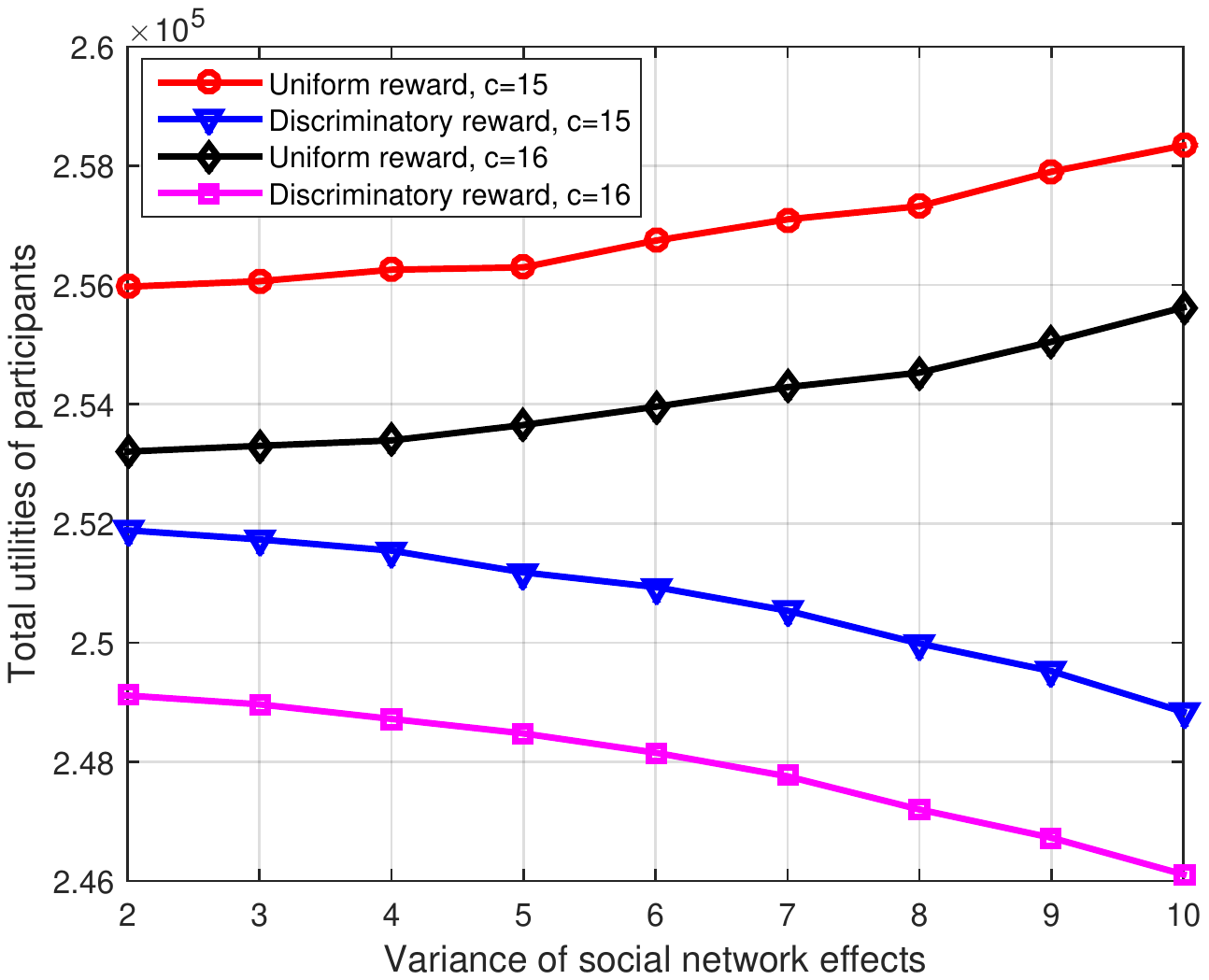}
 \caption{Total utilities of participants}
 \label{Fig:utility_cegma}
 \end{subfigure}
\begin{subfigure}[b]{0.32\textwidth}
\centering
\includegraphics[width=\columnwidth]{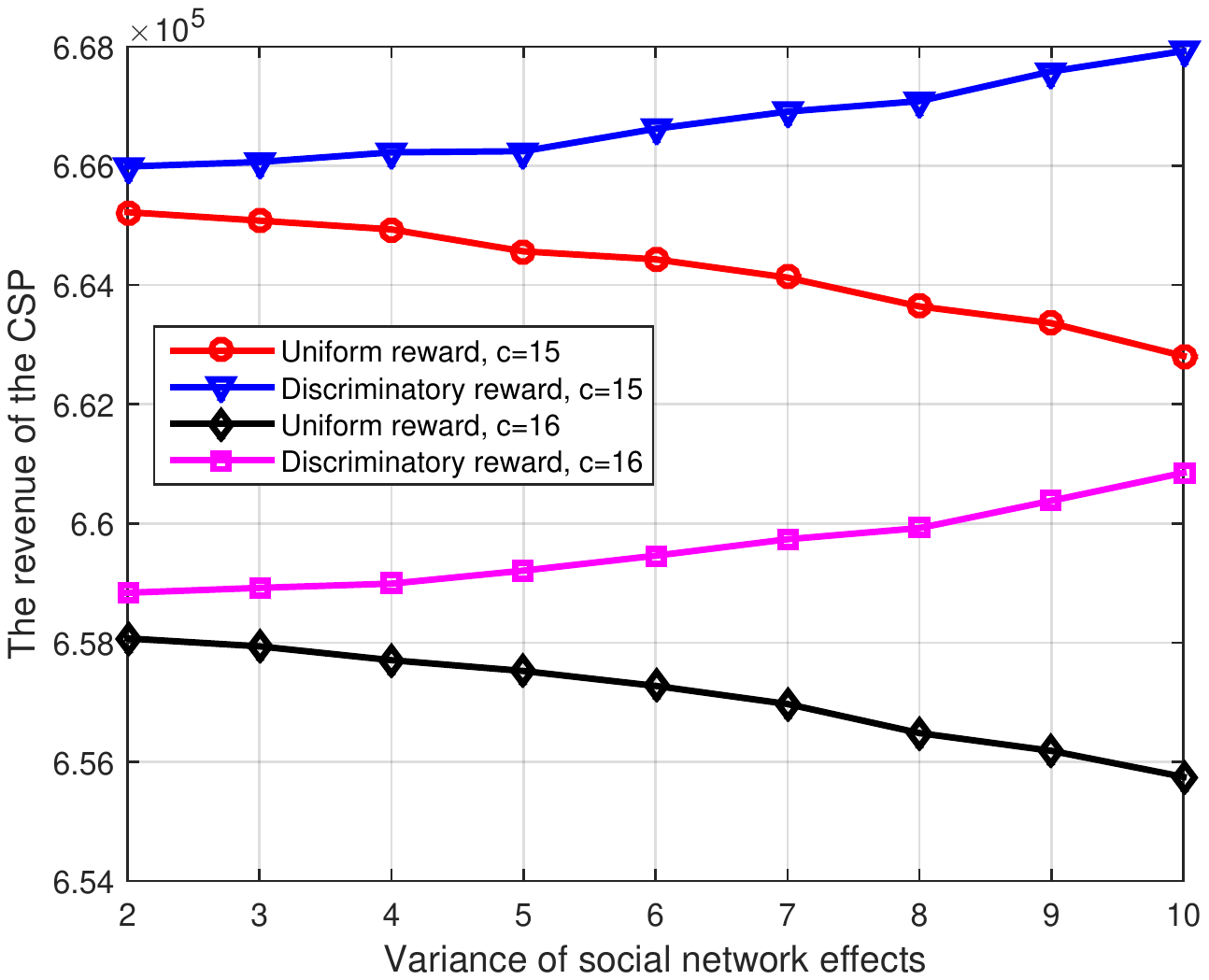}
 \caption{Revenue of the CSP}
 \label{Fig:revenue_cegma}
 \end{subfigure}
\centering
\caption{The impacts of variance of the distribution of in/out-degree.}
\label{Fig:VarianceNetworkEffects}
%\vspace{-0.2cm}
\end{figure}

Furthermore, we investigate the impacts of mean value of social network effects on the players of  Bayesian Stackelberg game in Fig.~\ref{Fig:AverageNetworkEffects}. As expected, we observe that the optimal offered reward decreases when the mean value of social network effects increases. The reason is that as the mean value of social network effects become stronger, the MUs can motivate each other to have higher participation level due to the interdependent participation behaviors. The total utilities of MUs become greater, and thus the CSP tends to offer less reward to save the cost. Consequently, the CSP achieves the greater revenue gain. In addition, comparing different curves with different value of $\mu$, we find that when $\mu$, i.e., the equivalent monetary worth of MUs' participation level increases, the CSP tends to offer more reward as the participation incentive. The reason is that the CSP is more inclined to arouse the enthusiasms of MUs when the CSP can transform the participation level of MUs to more monetary revenue efficiently. Therefore, to extract more surplus from MUs, the CSP offers more reward and thus achieves a greater revenue gain.

Figure~\ref{Fig:VarianceNetworkEffects} illustrates the impacts of variance of the distribution of social network effects on the players of Bayesian Stackelberg game. As the variance of social network effects decreases, the achieved revenue under uniform incentive mechanism is close to that under discriminatory incentive mechanism. The reason is that the heterogeneity of MUs is reduced when the value of variance decreases. We may consider an extreme case when the value of variance is zero, i.e., the MUs are homogeneous, the discriminatory incentive mechanism yields the same results as those of the uniform incentive mechanism. On the contrary, when the value of variance increases, the achieved revenue under the discriminatory incentive mechanism increases. The reason is that discriminatory incentive mechanism enables the CSP to exploit the different preference, i.e., parameters of utility function, for each of MUs, which leads to the decrease of total utilities of MUs and the increases of the revenue. Moreover, when the participation cost of MUs increases, the utility of MUs from participation is discounted. As such, the CSP tends to offer more reward to compensate the participation cost for improving their motivation. Meanwhile, the total utilities of MUs still decrease due to the increasing participation cost. Similarly, the revenue of the CSP decreases.

Lastly, both Figs.~\ref{Fig:AverageNetworkEffects} and~\ref{Fig:VarianceNetworkEffects} demonstrate the fact that the discriminatory incentive mechanism performs better in terms of the achieved revenue compared with uniform incentive mechanism. The intuition is that, with the certain social structure information, the CSP can set different reward for different MUs, as verified by the Fig.~\ref{Fig:rewardillustration}. As such, the CSP can significantly encourage the greater participation level of MUs. In particular, the social structure information guides the CSP to extract more surplus from the MUs' participation, which results in the greater revenue gain.

In summary, we draw the following engineering insights:
\begin{itemize}
\item The network effects tremendously stimulate higher mobile participation level, which leads to the greater total utilities of MUs as well as the greater revenue of the CSP.
\item The discriminatory incentive mechanism yields the greater revenue of the CSP compared with the uniform incentive mechanism in both complete and incomplete information scenarios.
\item The achieved revenue gap between the uniform and discriminatory incentive mechanisms depends on the variance of network effects.
\item The CSP has the incentive to offer more reward to the influential MUs and less reward to the susceptible MUs, in order to promote a greater overall participation level.
\end{itemize}

\section{Conclusion}\label{Sec:Conclusion}
In this work, we have developed a two-stage Stackelberg game theoretic model, and obtained the equilibrium using backward induction. The Crowdsensing Service Provider (CSP) determines the incentive in the first stage, and the Mobile Users (MUs) decide on their participation level in response to the observed incentive in the second stage. Taking the social (local) network effects among MUs into account, we have proposed two incentive mechanisms, i.e., discriminatory incentive and uniform incentive, where we have obtained the closed-form expression for optimal incentive. Moreover, we have formulated the Bayesian Stackelberg game to analyze the incentive mechanism, when the social network effects are uncertain. We have validated the existence and uniqueness of the Bayesian Stackelberg equilibrium by identifying the best response strategies of MUs. Performance evaluations have demonstrated that the network effects significantly improve the participation levels of MUs and the revenue of the CSP. Additionally, it has been confirmed that the social structure information helps the CSP to achieve greater revenue gain. The joint considerations of uncertainties of internal utility and social influence in the model are well worth studying in the future works.

%Another direction is to study time-dependent incentive mechanism for the mobile crowdsensing.

\appendix
%\subsection{Proof of Theorem 2:}
%\subsection{Proof of Proposition 1:}
%\begin{proof}
%Let ${\kappa _{ij}} = \frac{{{g_{ij}}}}{{2{b_i}}}$ and $\Delta x_i^{[k]}$ = $x_i^{[k]} - x_i^*$, for all $i \in \cal N$. From step 4 of Algorithm 1, we have
%\begin{eqnarray}
%\left| {\Delta x_i^{[k]}} \right| \le \left| {\sum\limits_{j=1}^{N} {{\kappa _{ij}}} \Delta x_j^{[k]}} \right| \le \sum\limits_{j=1}^{N} {{\kappa _{ij}}} \left| {\Delta x_j^{[k]}} \right|.
%\end{eqnarray}
%We also have ${\left\| {\Delta {{\bf{x}}^{[k]}}} \right\|_\infty } = \mathop {\max }\limits_i \left| {\Delta x_i^{[k]}} \right|$. Accordingly, we conclude that
%\begin{eqnarray*}
%{\left\| {\Delta {{\bf{x}}^{[k]}}} \right\|_\infty } &\le& \mathop {\max }\limits_i \left( {\left| {\sum\limits_{j=1}^{N} {{\kappa _{ij}}} \Delta x_j^{[k]}} \right|} \right) \le \mathop {\max }\limits_i \sum\limits_{j=1}^{N} {{\kappa _{ij}}} \mathop {\max }\limits_i \left( {\left| {\Delta x_j^{[k]}} \right|} \right) = \mathop {\max }\limits_i \sum\limits_{j=1}^{N} {{\kappa _{ij}}} {\left\| {\Delta {{\bf{x}}^{[k]}}} \right\|_\infty }.
%\end{eqnarray*}
%Under Assumption 1, we have $\sum\limits_{j=1}^{N} {\kappa _{ij}}< 1, \forall i$. Thus, ${\left\| {\Delta {x^{[k]}}} \right\|_\infty } \le {\left\| {\Delta {x^{[k - 1]}}} \right\|_\infty }$, and the algorithm leads to a contraction mapping of ${\left\| {\Delta {x^{[k - 1]}}} \right\|_\infty }$. The proof is then completed.
%\end{proof}
\subsection{Proof of Theorem 3:}
\begin{proof}
Similar to that in Section~\ref{Sec:Solution}, we first apply the unique Bayesian Nash equilibrium of the follower game given in Eq.~(\ref{Eq:BayesianParticipation}) into the objective function given in Eq.~(\ref{Eq:BayesianRevenue}). Since $r(k, l) = r$ for all MUs under the uniform incentive mechanism, we know that the unique participation level of the MU only depends on its out-degree $k$ from Eq.~(\ref{Eq:BayesianParticipation}), i.e., $x^*(k,l) = x^*(k)$. The expected revenue of the CSP given in~(\ref{Eq:BayesianRevenue}) is then expressed as follows:
\begin{equation}
\Pi  = \sum\limits_{k \in D} {P(k)\left( {\left( {\mu s - r} \right)x^*(k) - \mu t{{\left( {x^*(k)} \right)}^2}} \right)}.
\end{equation}
In particular, we have the following mathematical transformations,
\begin{eqnarray}
\Pi  &=& \left( {\mu s - r} \right)\left( {1 + r - c + \gamma \overline k \frac{{1 + r - c}}{{1 - \gamma \overline k }}} \right) - \mu t\sum\limits_{k \in D} {P(k){{\left( {1 + r - c + \gamma \overline k \frac{{1 + r - c}}{{1 - \gamma \overline k }}} \right)}^2}} \nonumber \\ &=& \left( {\mu s - r} \right)\frac{{1 + r - c}}{{1 - \gamma \overline k }} - \mu t\left( {1 + r - c} \right)^2 - 2\mu t\left( {1 + r - c} \right)\gamma \overline k \frac{{1 + r - c}}{{1 - \gamma \overline k }} \nonumber \\&&- \mu t\sum\limits_{k \in D} {P(k){k^2}{{\left( {\gamma \frac{{1 + r - c}}{{1 - \gamma \overline k }}} \right)}^2}} \nonumber \\ &=& \left( {\mu s - r} \right)\frac{{1 + r - c}}{{1 - \gamma \overline k }} - \mu t\left( {1 + r - c} \right)^2 - 2\mu t\left( {1 + r - c} \right)\gamma \overline k \frac{{1 + r - c}}{{1 - \gamma \overline k }} \nonumber \\ &&- \mu t\left( {{{\overline k }^2} + {\sigma _k}^2} \right){\left( {\gamma \frac{{1 + r - c}}{{1 - \gamma \overline k }}} \right)^2} \nonumber \\  &=& \left( {\mu s - r} \right)\frac{{1 + r - c}}{{1 - \gamma \overline k }} - \mu t{\gamma ^2}{\sigma _k}^2{\left( {\gamma \frac{{1 + r - c}}{{1 - \gamma \overline k }}} \right)^2} - \Bigg( \mu t\left( {1 + r - c} \right)^2 \nonumber \\ &&+ 2\mu t\left( {1 + r - c} \right)\gamma \overline k \frac{{1 + r - c}}{{1 - \gamma \overline k }} + \mu t{{\overline k }^2}{{\left( {\gamma \frac{{1 + r - c}}{{1 - \gamma \overline k }}} \right)}^2} \Bigg) \nonumber \\ &=& \left( {\mu s - r} \right)\frac{{1 + r - c}}{{1 - \gamma \overline k }} - \mu t{\gamma ^2}{\sigma _k}^2{\left( {\frac{{1 + r - c}}{{1 - \gamma \overline k }}} \right)^2} - \mu t{\left( {1 + r - c + \gamma \overline k \frac{{1 + r - c}}{{1 - \gamma \overline k }}} \right)^2} \nonumber \\ &=& \left( {\mu s - r} \right)\frac{{1 + r - c}}{{1 - \gamma \overline k }} - \mu t{\gamma ^2}{\sigma _k}^2{\left( {\frac{{1 + r - c}}{{1 - \gamma \overline k }}} \right)^2} - \mu t{\left( {\frac{{1 + r - c}}{{1 - \gamma \overline k }}} \right)^2} \nonumber \\ &=& \left( {\mu s + 1 - c - (1 + r - c)} \right)\frac{{1 + r - c}}{{1 - \gamma \overline k }} - \mu t{\gamma ^2}{\sigma _k}^2{\left( {\frac{{1 + r - c}}{{1 - \gamma \overline k }}} \right)^2} - \mu t{\left( {\frac{{1 + r - c}}{{1 - \gamma \overline k }}} \right)^2} \nonumber \\ &=& \left( {\mu s + 1 - c} \right)\frac{{1 + r - c}}{{1 - \gamma \overline k }} - \left( {1 - \gamma \overline k  + \mu t + \mu t{\gamma ^2}{\sigma _k}^2} \right){\left( {\frac{{1 + r - c}}{{1 - \gamma \overline k }}} \right)^2}.
\end{eqnarray}
Then, we evaluate its first-order optimality condition with respect to the reward, and we have
$\frac{{\partial \Pi }}{{\partial r}} = \frac{{\partial \Pi }}{{\partial \frac{{1 + r - c}}{{1 - \gamma \overline k }}}}\frac{{\partial \frac{{1 + r - c}}{{1 - \gamma \overline k }}}}{{\partial r}}$, which yields
\begin{equation}
\left( {\left( {\mu s + 1 - c} \right) - 2\left( {1 - \gamma \overline k  + \mu t + \mu t{\gamma ^2}{\sigma _k}^2} \right)\left( {\frac{{1 + r^* - c}}{{1 - \gamma \overline k }}} \right)} \right)\frac{1}{{1 - \gamma \overline k }} =0.
\end{equation}
Thus, we can conclude that
\begin{equation}
1 + r^* - c = \frac{{\left( {\mu s + 1 - c} \right)\left( {1 - \gamma \overline k } \right)}}{{2\left( {1 - \gamma \overline k  + \mu t + \mu t{\gamma ^2}{\sigma _k}^2} \right)}}.
\end{equation}
Therefore, the optimal uniform reward under Bayesian formulation is uniquely determined, which is given as follows:
\begin{equation}
{r^*} = c - 1 + \frac{{\left( {\mu s + 1 - c} \right)\left( {1 - \gamma \overline k } \right)}}{{2\left( {1 - \gamma \overline k  + \mu t + \mu t{\gamma ^2}{\sigma _k}^2} \right)}}.
\end{equation}
\end{proof}

\subsection{Proof of Theorem 4:}
\begin{proof}
The CSP determines $r(k,l)$ for the MU with out-degree $k$ and in-degree $l$ to maximize its expected revenue. The derivation of the optimal reward follows the similar steps discussed in Section~\ref{Sec:Solution}. We have the expected revenue of the CSP, which is expressed as follows:
\begin{multline}
\Pi =\sum\limits_{l \in D}\Bigg( \sum\limits_{k \in D}H(l)P(k)\bigg( \left( {\mu s - r(k,l)} \right)\Big( {1 + r(k,l) - c + \gamma k\frac{{1 + \overline r  - c}}{{1 - \gamma \overline k }}} \Big) - \\ \mu t\Big( 1 + r(k,l) - c + \gamma k\frac{1 + \overline r  - c}{1 - \gamma \overline k } \Big)^2 \bigg)  \Bigg).
\end{multline}
By taking the first derivative with respect to $r$ with any out-degree $m \in D$ and in-degree $n \in D$, we have
\begin{eqnarray}
&&\frac{{\partial \Pi }}{{\partial r(m,n)}}= \frac{\partial }{{\partial r(m,n)}}H(n)P(m)\Bigg( \left( {\mu s - r(m,n)} \right)\left( {1 + r(m,n) - c + \gamma m\frac{{1 + \overline r  - c}}{{1 - \gamma \overline k }}} \right) \nonumber \\&-& \mu t \left( {1 + r(m,n) - c + \gamma m\frac{{1 + \overline r  - c}}{{1 - \gamma \overline k }}} \right)^2 \Bigg)  + \frac{\partial }{{\partial r(m,n)}}\sum\limits_{l \ne n} \sum\limits_{k \ne m} H(l)P(k) \nonumber \\ &\times&\Bigg( \left( {\mu s - r(k,l)} \right)\left( {1 + r(k,l) - c + \gamma k\frac{{1 + \overline r  - c}}{{1 - \gamma \overline k }}} \right) \nonumber \\ &-& \mu t \left( {1 + r(k,l) - c + \gamma m\frac{{1 + \overline r  - c}}{{1 - \gamma \overline k }}} \right)^2 \Bigg).
\end{eqnarray}
Accordingly, we know
\begin{eqnarray}
&&\frac{{\partial \Pi }}{{\partial r(m,n)}} =  H(n)P(m)\Bigg( - \left( {1 + r(m,n) - c + \gamma m\frac{{1 + \overline r  - c}}{{1 - \gamma \overline k }}} \right) + \left( 1 + \gamma m\frac{\overline H (n)P(m)}{1 - \gamma \overline k } \right) \nonumber \\ &\times& \left( {\mu s - r(m,n)} \right)- 2\mu t\left( {1 + r(m,n) - c + \gamma m\frac{{1 + \overline r  - c}}{{1 - \gamma \overline k }}} \right)\left( {1 + \gamma m\frac{{\overline H (n)P(m)}}{{1 - \gamma \overline k }}} \right) \Bigg) \nonumber \\ &+& \frac{\partial }{{\partial r(m,n)}}\sum\limits_{k \ne m} \sum\limits_{l \ne n} H(l)P(k)\Bigg(\left( {\mu s - r(k,l)} \right)\gamma k\frac{{\overline H (n)P(m)}}{{1 - \gamma \overline k }} \nonumber \\ &-& 2\mu t\gamma k\frac{{\overline H (n)P(m)}}{{1 - \gamma \overline k }}\left( 1 + r(k,l) - c + \gamma m\frac{{1 + \overline r  - c}}{{1 - \gamma \overline k }} \right) \Bigg).
\end{eqnarray}
Since we know $\frac{{\partial \Pi }}{{\partial r(m,n)}} =0$, we conclude that
\begin{eqnarray}
0&=&H(n)P(m)\bigg( - \left( {1 + r(m,n) - c + \gamma m\frac{{1 + \overline r  - c}}{{1 - \gamma \overline k }}} \right) + \mu s - r(m,n) \nonumber \\&&- 2\mu t\left(1 + r(m,n) - c + \gamma m\frac{{1 + \overline r  - c}}{{1 - \gamma \overline k }} \right) \bigg)  + \gamma \frac{\overline H (n)P(m)}{{1 - \gamma \overline k }}\sum\limits_{l \in D} \sum\limits_{k \in D} H(l)P(k)k \nonumber \\ &&\times\left( \mu s - r(k,l) - 2\mu t\left(1 + r(k,l) - c + \gamma k\frac{{1 + \overline r  - c}}{1 - \gamma \overline k } \right) \right).
\end{eqnarray}
With simple steps, we obtain the following expression
\begin{eqnarray}\label{Eq:r(m,n)}
r^*(m,n) &=& \frac{1}{{2(1 + \mu t)}}\Bigg( c - 1 - \gamma m\frac{{1 + \overline r  - c}}{{1 - \gamma \overline k }} + \mu s - 2\mu t\left( {1 - c + \gamma m\frac{{1 + \overline r  - c}}{{1 - \gamma \overline k }}} \right) \nonumber \\ &&+ \frac{{\gamma n}}{{1 - \gamma \overline k }}\Bigg( {\mu s - 2\mu t\left( {1 - c} \right) - \frac{{2\mu t\gamma \left( {1 + \overline r  - c} \right)\left( {{\sigma _k}^2 + {{\overline k }^2}} \right)}}{{(1 - \gamma \overline k )\overline k }}} \Bigg) - \gamma n\frac{{1 + 2\mu t}}{1 - \gamma \overline k }\psi \Bigg), \nonumber \\
\end{eqnarray}
where
\begin{equation}\label{Eq:definition}
\psi  = \sum\limits_{l \in D} {\sum\limits_{k \in D} {\frac{1}{{\overline k }}kH(l)P(k)} } r(k,l).
\end{equation}

Moreover, based on the definition of $\overline r$, we have
\begin{eqnarray}
\overline r &=& \sum\limits_{m \in D} {\sum\limits_{n \in D} {\overline H (n)P(m)r(m,n)} } \nonumber \\  &=& \frac{1}{{2(1 + \mu t)}}\Bigg( c - 1 - \gamma \overline k \frac{{1 + \overline r  - c}}{{1 - \gamma \overline k }} + \mu s - 2\mu t\left( {1 - c + \gamma \overline k \frac{{1 + \overline r  - c}}{{1 - \gamma \overline k }}} \right) + \frac{{\gamma \overline k }}{{1 - \gamma \overline k }} \nonumber \\ &&\times\bigg( {\mu s - 2\mu t\left( {1 - c} \right) - \frac{{2\mu t\gamma \left( {1 + \overline r  - c} \right)\left( {{\sigma _k}^2 + {{\overline k }^2}} \right)}}{{(1 - \gamma \overline k )\overline k }}} \bigg) - \gamma \overline k \frac{{1 + 2\mu t}}{{1 - \gamma \overline k }}\psi  \Bigg).
\end{eqnarray}
Thus, we have
\begin{eqnarray}\label{Eq:overliner}
\overline r  &=& \frac{1}{{2(1 + \mu t)}}\Bigg(- \left( {1 - c + \gamma \overline k \frac{{1 + \overline r  - c}}{{1 - \gamma \overline k }}} \right)\left( {\frac{{2\mu t}}{{1 - \gamma \overline k }} + 1} \right) + \frac{{\mu s}}{{1 - \gamma \overline k }} \nonumber \\ &-& \frac{{2\mu t{\gamma ^2}{\sigma _k}^2\left( {1 + \overline r  - c} \right)}}{{{{\left( {1 - \gamma \overline k } \right)}^2}}}\left( {1 - c + \gamma \overline k \frac{{1 + \overline r  - c}}{{1 - \gamma \overline k }}} \right) - \gamma \overline k \frac{{1 + 2\mu t}}{{1 - \gamma \overline k }}\psi \Bigg).
\end{eqnarray}
Likewise, based on the definition of $\psi$ given in Eq.~(\ref{Eq:definition}), we know
\begin{equation}
\psi = \sum\limits_{l \in D} {\sum\limits_{k \in D} {\frac{1}{{\overline k }}kH(l)P(k)} } r(k,l).
\end{equation}
Thus, we have
\begin{eqnarray}\label{Eq:psi}
\psi &=& \frac{1}{{2(1 + \mu t)\bar k}}\Bigg( - \overline k \left( {1 - c} \right) - \frac{{\gamma \left( {1 + \overline r  - c} \right)\left( {{\sigma _k}^2 + {{\overline k }^2}} \right)}}{{1 - \gamma \overline k }} + \mu s\overline k  - 2\mu t\Bigg( \left( {1 - c} \right)\overline k  \nonumber \\
&& + \frac{{\gamma \left( {1 + \bar r - c} \right)\left( {{\sigma _k}^2 + {{\bar k}^2}} \right)}}{{1 - \gamma \bar k}}) + \frac{{\gamma {{\bar k}^2}}}{{1 - \gamma \bar k}}(\mu s - \frac{{2\mu t\gamma \left( {1 + \bar r - c} \right)\left( {{\sigma _k}^2 + {{\bar k}^2}} \right)}}{{(1 - \gamma \bar k)\bar k}}
\nonumber \\
&&-2\mu t\left( {1 - c} \right)\Bigg) - \gamma {{\overline k }^2}\frac{{1 + 2\mu t}}{{1 - \gamma \overline k }}\psi  \Bigg) \nonumber \\
&=& \frac{1}{{2(1 + \mu t)}}\Bigg(c - 1 - \frac{{\gamma \overline k \left( {1 + \overline r  - c} \right)\left( {{\sigma _k}^2 + {{\overline k }^2}} \right)}}{{1 - \gamma \overline k }} + \mu s - 2\mu t\left( {1 - c + \frac{{\gamma \overline k \left( {1 + \overline r  - c} \right)}}{{1 - \gamma \overline k }}} \right) \nonumber \\
&&+ \frac{{\gamma \overline k }}{{1 - \gamma \overline k }}\left( \mu s - 2\mu t\left( \left( {1 - c} \right) + \frac{{\gamma \overline k \left( {1 + \overline r  - c} \right)}}{{1 - \gamma \overline k }} \right) - \frac{{2\mu t\gamma \left( {1 + \overline r  - c} \right){\sigma _k}^2}}{{(1 - \gamma \overline k )\overline k }} \right) \nonumber \\&&- \frac{{\gamma {\sigma _k}^2\left( {1 + \overline r  - c} \right)}}{{(1 - \gamma \overline k )\overline k }} - \frac{{2\mu t\gamma {\sigma _k}^2\left( {1 + \overline r  - c} \right)}}{{(1 - \gamma \overline k )\overline k }} - \gamma \overline k \frac{{1 + 2\mu t}}{{1 - \gamma \overline k }}\psi  \Bigg) \nonumber \\
&=& \frac{1}{{2(1 + \mu t)}}\Bigg(- \left( {1 - c + \frac{{\gamma \overline k \left( {1 + \overline r  - c} \right)}}{{1 - \gamma \overline k }}} \right)\left( {1 + \frac{{2\mu t}}{{1 - \gamma \overline k }}} \right) + \frac{{\mu s}}{{1 - \gamma \overline k }} \nonumber \\ &&- \frac{{\gamma {\sigma _k}^2\left( {1 + \overline r  - c} \right)}}{{(1 - \gamma \overline k )\overline k }}\left( {1 + \frac{{2\mu t}}{{1 - \gamma \overline k }}} \right)-\gamma \overline k \frac{{1 + 2\mu t}}{{1 - \gamma \overline k }}\psi \Bigg).
\end{eqnarray}

The two expressions given in Eq.~(\ref{Eq:overliner}) and Eq.~(\ref{Eq:psi}) together formulate a full rank linear equation system with two variables, i.e., $\overline r$ and $\psi$. Thus, we can derive the closed-form expression for both $\overline r$ and $\psi$~\cite{kailath1980linear}. In particular, the closed-form expression for the variable $\overline r$ is obtained as
\begin{eqnarray}
\overline r  &=& {\rho ^{ - 1}}\frac{{\gamma \overline k }}{{2(1 + \mu t) - \gamma \overline k }}\left( { - \left( {1 - c + \frac{{\gamma \overline k \left( {1 - c} \right)}}{{1 - \gamma \overline k }}} \right)\left( {1 + \frac{{2\mu t}}{{1 - \gamma \overline k }}} \right) + \frac{{\mu s}}{{\gamma \overline k }}} \right) \nonumber \\ &&- \frac{{\mu t{\gamma ^2}{\sigma _k}^2\left( {1 - c} \right)}}{{{(1 - \gamma \overline k )}^2}(1 + \mu t)} + \frac{(1 + 2\mu t){\gamma ^2}{\sigma _k}^2\left( {1 - c} \right)}{2\left( {2 + 2\mu t - \gamma \overline k } \right)(1 - \gamma \overline k )(1 + \mu t)}\left( 1 + \frac{{2\mu t}}{{\gamma \overline k }} \right),
\end{eqnarray}
where
\begin{eqnarray}
\rho  &=& 1 + \frac{{\gamma \overline k }}{{2(1 + \mu t) - \gamma \overline k }}\left( {1 + \frac{{2\mu t}}{{1 - \gamma \overline k }}} \right) + \frac{{\mu t{\gamma ^2}{\sigma _k}^2\left( {1 + \overline r  - c} \right)}}{{{{(1 - \gamma \overline k )}^2}(1 + \mu t)}} \nonumber \\ &&- \frac{{(1 + 2\mu t){\gamma ^2}{\sigma _k}^2}}{{2\left( {2 + 2\mu t - \gamma \overline k } \right)(1 - \gamma \overline k )(1 + \mu t)}}\left( {1 + \frac{{2\mu t}}{{\gamma \overline k }}} \right).
\end{eqnarray}
Likewise, the closed-form expression for the variable $\psi$ can be derived through similar steps. Accordingly, $r(m,n)$ can be obtained after plugging these two closed-form expressions into Eq.~(\ref{Eq:r(m,n)}), and thus the solution of $r(m,n)$ is unique. The proof is then completed.
\end{proof}

\subsection{Discussions on asymmetric social influence:}\label{Sec:discussion on gij}

It is noted that our proposed approach is not restricted to the symmetry assumption that $g_{ij} = g_{ji}$. We will illustrate the model with asymmetric social influence in the complete information and incomplete information scenarios, respectively.

\textit{1) Complete Information on Social Structure:}

First, we will show how our approach can be extended to the model with asymmetric social influence in complete information scenario.

We first check whether Lemma 1 still holds when $g_{ij} \neq g_{ji}$. In Lemma~1, we proved ${{\bf{B}} - {\bf{G}}}$ is invertible by proving it is a positive definite matrix, and evidently the positive definite matrix is symmetric. Nevertheless, we find that ${{\bf{B}} - {\bf{G}}}$ is strictly diagonally dominant even when $g_{ij} = g_{ji}$ does not hold. According to Levy–Desplanques theorem~\cite{bailey1969bounds}, a strictly diagonally dominant matrix is non-singular. Thus, ${{\bf{B}} - {\bf{G}}}$ is still invertible. Therefore, without the symmetry assumption of $g_{ij}$, the Lemma~1 still holds.

Recall that ${\bf K} = {\left( {{\bf{B}} - {\bf{G}} } \right)^{ - 1}}$ is symmetric, since we have $g_{ij}= g_{ji}$. Thus, ${{\bf{K}}^ \top } + {\bf{K}}$ can be simplified as $2{\bf {K}}$. However, when $g_{ij}\neq g_{ji}$, ${\bf K }= {\left( {{\bf{B}}- {\bf{G}} } \right)^{ - 1}}$ is not symmetric, and ${{\bf{K}}^ \top } + {\bf{K}}$ cannot be simplified, which will complicate our mathematical calculation process.
 Nevertheless, the analytical solution for the profit maximization of the CSP is still structurally the same as that in Section IV-B.

For the discriminatory incentive mechanism: When $g_{ij}\neq g_{ji}$, the objective function of the CSP and the optimal value of ${\bf x}^*$ keep unchanged. By plugging $\bf x$ into the objective function of the CSP, we have
\begin{equation}
\Pi = \mu \left(s{{\bf{1}}^ \top }{\bf{K}}\left( {{\bf{a}} + {\bf{r}} - c{\bf{1}}} \right) - t{\left( {{\bf{a}} + {\bf{r}} - c{\bf{1}}} \right)^ \top }{{\bf{K}}^2}\left( {{\bf{a}} + {\bf{r}} - c{\bf{1}}} \right)\right) - {{\bf{r}}^ \top }{\bf{K}}\left( {{\bf{a}} + {\bf{r}} - c{\bf{1}}} \right).
\end{equation}

We next take the partial derivative of the objective function of the CSP with respect to the decision vector $\bf r$ to zero, i.e., $\frac{{\partial \Pi }}{{\partial {\bf{r}}}} = 0$, we have
\begin{equation}
\frac{{\partial \Pi }}{{\partial \bf r}} = \mu \left( {s{\bf{K1}} - t\left({{\bf{K}}^2} + {{\left({{\bf{K}}^2}\right)}^ \top }\right)\left( {{\bf{a}} + {\bf{r}} - c{\bf{1}}} \right)} \right) - {\bf{K}}\left( {{\bf{a}} + {\bf{r}} - c{\bf{1}}} \right) - {\bf{Kr}} = 0.
\end{equation}

Then, we have
\begin{equation}
\mu \left( {s{\bf{K1}} - t({{\bf{K}}^2} + {{({{\bf{K}}^2})}^ \top })\left( {{\bf{a}} - c{\bf{1}}} \right)} \right) - {\bf{K}}\left( {{\bf{a}} - c{\bf{1}}} \right) = \left( {2{\bf{K}} + \mu t({{\bf{K}}^2} + {{({{\bf{K}}^2})}^ \top })} \right){\bf{r}}.
\end{equation}

We obtain the optimal value ${\bf r}^*$, which is represented as follows:
\begin{equation}
{\bf{r}^*} = \left( {2{\bf{K}} + \mu t({{\bf{K}}^2} + {{({{\bf{K}}^2})}^ \top })} \right)^{-1} \left(\mu \left( {s{\bf{K1}} - t({{\bf{K}}^2} + {{({{\bf{K}}^2})}^ \top })\left( {{\bf{a}} - c{\bf{1}}} \right)} \right) - {\bf{K}}\left( {{\bf{a}} - c{\bf{1}}} \right)\right).
%{{\bf{r}}^*} = {\left( {2{\bf{I}} + 2\mu t{\bf{K}}} \right)^{ - 1}}\left( {\mu \left( {s{\bf{1}} - 2t{\bf{Ka}}} \right) - {\bf{a}}} \right).
\end{equation}

For uniform incentive mechanism, ${\bf{x}} = {\bf{K}}\left[{\bf{a}} + (r - c){\bf{1}}\right]$. We obtain
\begin{equation}
\Pi  = \mu \left( {s{{\bf{1}}^ \top }{\bf{K}}\left( {{\bf{a}} + r{\bf{1}}} \right) - t{{\left( {{\bf{a}} + r{\bf{1}}} \right)}^ \top }{{\bf{K}}^2}\left( {{\bf{a}} + r{\bf{1}}} \right)} \right) - r{{\bf{1}}^ \top }{\bf{K}}\left( {{\bf{a}} + r{\bf{1}}} \right).
\end{equation}
Then, we evaluate its first-order optimality condition with respect to the reward $r$, which yields
\begin{equation}
\frac{{\partial \Pi }}{{\partial r}} = \mu \left( s{{\bf{1}}^ \top }{\bf{K1}} - 2t{\left( {{\bf{a}} + r{\bf{1}}} \right)^ \top }{{\bf{K}}^2}{\bf{1}}\right)  - {{\bf{1}}^ \top }{\bf{K}}\left( {{\bf{a}} + r{\bf{1}}} \right) - r{{\bf{1}}^ \top }{\bf{K1}} = 0.
\end{equation}
As a result, with simple steps, we obtain the optimal value of the uniform reward, which is represented by
\begin{equation}\label{Eq:8}
{r^*} = {\left(\mu t{{\bf{1}}^ \top }\left({{\bf{K}}^2} + {{\bf{K}}^{{2^ \top }}}\right) + 2{{\bf{1}}^ \top }{\bf{K1}}\right)^{ - 1}}\left( \mu \left(s{{\bf{1}}^ \top }{\bf{K1}} - t{({\bf{a}} - c{\bf{1}})^ \top }\left({{\bf{K}}^2} + {{\bf{K}}^{{2^ \top }}}\right){\bf{1}}\right) - {{\bf{1}}^ \top }{\bf{K}}\left( {{\bf{a}} - c{\bf{1}}} \right)\right).
\end{equation}

\textit{2) Incomplete Information on Social Structure:} In incomplete information scenario, the proposed solution is also not restricted to such symmetry assumption that $g_{ij} = g_{ji}$. In other words, social ties are not reciprocal under our consideration in incomplete information scenario. Instead, we consider that the adjacency matrix $\bf G$ leads to different in-degrees and out-degrees of MUs. The in-degree denotes the number of other MUs that a certain MU influences, the out-degree denotes the number of other MUs influencing this MU.  Thus, the in-degree represents its influence and the out-degree represents its susceptibility. The distribution of in-degree and out-degree captures the social network effects from the network interaction patterns. Recall that since we consider both the in-degree and out-degree distributions of each MU instead of the degree distribution, the proposed model can still be applied to the asymmetric social ties. For example, an MU Alice has the social influence on another MU, but the latter may not have the social influence on Alice. The reason is that Alice may have different in-degree and out-degree. %To make it clear, we have added the above explanation in the revised manuscript (Page 16).

%Nevertheless, we also would like to stress that the key focus in our paper is to study the uncertainty of social ties and its impact on the interaction of MUs as well as the incentive mechanism in the proposed crowdsensing application, which is more novel and unique compared with other works in the literature. %We have added such descriptions and justifications in the revised manuscript.

\subsection{Discussion on cost function:}\label{Sec:discussion on cost}

Note that our model can be straightforwardly extended to a more generalized heterogenous unit cost and nonlinear cost function here, and we now show an example when the cost function becomes ${c_i}{{x_i}^2}$. We choose this function since it can represent the condition that the incurred cost increases when the MU keeps increasing its participation level, and the marginal increase is increasing (instead of keeping unchanged for linear cost unction).

In the following, we illustrate how our approach can be applied to the model with heterogenous non-linear cost function in the complete information and incomplete information scenarios, respectively.

\textit{1) Complete Information on Social Structure:}
Firstly, the utility of MU $i$ is reformulated as follows:
\begin{equation}
{u_i}({x_i},{{\bf{x}}_{ - {{i}}}}, {\bf{r}}) =  {a_i}{x_i} - {b_i}{x_i}^2 +\sum\limits_{j= 1}^N {{g_{ij}}{x_i}{x_j}}   + {r_i}{x_i} - {c_i}{{x_i}^2}.
\end{equation}

The best response of MU $i$ becomes:
\begin{equation}
x_i^* =\max \left\{ 0, \frac{{{r_i} + {a_i}}}{{2{b_i}} + {2{c_i}}} + \sum\limits_{j=1}^{N}\frac{{ {{g_{ij}}} }}{{2{b_i}}+ {2{c_i}}}{x_j} \right\}, \forall i.
\end{equation}

Then, we make an assumption for sufficient condition similar to that in Assumption 1, i.e., Assumption R1: $\sum\limits_{j = 1}^N {\frac{{{g_{ij}}}}{{2{b_i} + {c_i}}}}  < 1,\forall i$. Under such assumption, the MU has the upper bound on participation level, e.g., due to the battery capacity of a mobile device, and thus this assumption is reasonable. Provided that Assumption R1 holds, the existence and uniqueness of Nash equilibrium in MU participation game can still be proved in the way that is structurally the same as the proof of Theorem~1.

Next we can obtain the closed-form solution for the unique Nash equilibrium, which is written in matrix form as follows:
\begin{equation}
{\bf{x}} = {\bf K'}\left( {{\bf{a}} + {\bf{r}}} \right),
\end{equation}
where ${\bf K'}: = {\left( {{\bf{B}} + {\bf{C}} - {\bf{G}}} \right)^{ - 1}}$, ${\bf C}:=diag(2c_1, 2c_2, \ldots, 2c_N)$, ${\bf B}:=diag(2b_1, 2b_2, \ldots, 2b_N)$ and ${\bf G}:=[g_{ij}]_{N \times N}$.

Under Assumption~R1, we can prove ${{\bf{B}} + {\bf{C}} - {\bf{G}}}$ is invertible similar to the proof in Lemma~1. Next, we show how the incentive mechanism with heterogenous non-linear cost function is derived.

First, for the discriminatory incentive mechanism, the CSP's revenue is formulated as follows:
\begin{equation}
\Pi  = \mu \left( {s{{\bf{1}}^ \top }{\bf{K'}}\left( {{\bf{a}} + {\bf{r}}} \right) - t{{\left( {{\bf{a}} + {\bf{r}}} \right)}^ \top }{{\bf{K'}}^2}\left( {{\bf{a}} + {\bf{r}}} \right)} \right) - {{\bf{r}}^ \top }{\bf{K'}}\left( {{\bf{a}} + {\bf{r}}} \right).
\end{equation}

Taking the partial derivative of the above objective function with respect to the decision vector $\bf r$ to zero, we have
\begin{equation}
\frac{{\partial \Pi }}{{\partial \bf r}} = \mu \left( {s{\bf{K'1}} - 2t{{\bf{K'}}^2}\left( {{\bf{a}} + {\bf{r}} - c{\bf{1}}} \right)} \right) - {\bf{K'}}\left( {{\bf{a}} + {\bf{r}} - c{\bf{1}}} \right) - {\bf{K'r}} = 0.
\end{equation}

Then, we have
\begin{equation}
\mu \left( {s{\bf{K'1}} - 2t{{\bf{K'}}^2}{\bf{a}}} \right) - {\bf{K'a}} = \left( {2{\bf{K'}} + 2\mu t{{\bf{K'}}^2}} \right){\bf{r}}.
\end{equation}

Finally, we obtain the optimal value of ${\bf r}^*$, which is represented as follows:
\begin{equation}
{{\bf{r}}^*} = {\left( {2{\bf{I}} + 2\mu t{\bf{K'}}} \right)^{ - 1}}\left( {\mu \left( {s{\bf{1}} - 2t{\bf{K'a}}} \right) - {\bf{a}}} \right).
\end{equation}

Likewise, for uniform incentive mechanism, we have
\begin{equation}
\Pi  = \mu \left( {s{{\bf{1}}^ \top }{\bf{K'}}\left( {{\bf{a}} + r{\bf{1}}} \right) - t{{\left( {{\bf{a}} + r{\bf{1}}} \right)}^ \top }{{\bf{K'}}^2}\left( {{\bf{a}} + r{\bf{1}}} \right)} \right) - r{{\bf{1}}^ \top }{\bf{K'}}\left( {{\bf{a}} + r{\bf{1}}} \right).
\end{equation}
Then, we evaluate its first-order optimality condition with respect to the uniform reward $r$, which yields
\begin{equation}
\frac{{\partial \Pi }}{{\partial r}} = \mu \left(s{{\bf{1}}^ \top }{\bf{K'1}} - 2t{\left( {{\bf{a}} + r{\bf{1}}} \right)^ \top }{{\bf{K'}}^2}{\bf{1}}\right)  - {{\bf{1}}^ \top }{\bf{K'}}\left( {{\bf{a}} + r{\bf{1}}} \right) - r{{\bf{1}}^ \top }{\bf{K'1}} = 0.
\end{equation}
As a result, with simple steps, we can obtain the optimal value of the uniform reward, which is represented by
\begin{equation}
{r^*} = {\left( {2\mu t{{\bf{1}}^ \top }{{\bf{K'}}^2}{\bf{1}} + 2{{\bf{1}}^ \top }{\bf{K'1}}} \right)^{ - 1}}\left( \mu \left(s{{\bf{1}}^ \top }{\bf{K'1}} - 2t{{\bf{a}}^ \top }{{\bf{K'}}^2}{\bf{1}}\right) - {{\bf{1}}^ \top }{\bf{K'a}}\right) .
\end{equation}

\textit{2) Incomplete Information on Social Structure:}
Recall that the expected utility of MU $i$ in incomplete information scenario is reformulated as follows:
\begin{equation}
{U_i}({x_i},{{\bf{x}}_{ - i}},{\bf{r}}) = {\mathbb E}\left[ {{u_i}({x_i},{{\bf{x}}_{ - i}},{\bf{r}})} \right] = {x_i} - \frac{1}{2}{x_i}^2 + \gamma {x_i}{\mathbb E}\left[ {\sum\limits_{j \in {\cal N}_i} {{x_j}} } \right] + {r_i}{x_i} - c_i{x_i}^2.
\end{equation}

With the same deduction shown in Section V-A, the expected utility can be expressed as follows:
\begin{equation}
{U_i}({x_i},{{\bf{x}}_{ - i}},{\bf{r}}, k_i) = (1 + {r_i}){x_i} - (\frac{1}{2}+ c_i ){x_i}^2 + \gamma k_i{x_i}{\rm Avg}({{\bf{x}}_{ - i}}),
\end{equation}

Similar to the proof of Theorem~2, we also have
\begin{eqnarray}
\frac{{\partial {U_i}(\bar x,{{\bf{x}}_{ - i}},{\bf{r}},{k_i})}}{{\partial {x_i}}} &=& (1 + {r_i})  - c_i  \bar x- \bar x + \gamma {k_i}{\rm{Avg}}({{\bf{x}}_{ - i}}) \nonumber \\ &\le& (1 + {r_i}) - c_i  \bar x - \bar x + \gamma {k^{\max }}\bar x \nonumber \\ &=& 1 + {r_i} + (\gamma {k^{\max }} - 1 - c_i)\bar x.
\end{eqnarray}

Since we have the condition $\gamma {k^{\max }}< 1$, we know that $\gamma {k^{\max }} - c_i < 1$ holds evidently. Thus, we can still guarantee the existence of the Bayesian follower game. Moreover, it still holds that $\left| {\frac{{{\partial ^2}{U_i}(\bar x,{{\bf{x}}_{ - i}},{\bf{r}},{k_i})}}{{\partial {x_i}\partial {\rm{Avg}}({{\bf{x}}_{ - i}})}}/\frac{{{\partial ^2}{U_i}(\bar x,{{\bf{x}}_{ - i}},{\bf{r}},{k_i})}}{{\partial {x_i}\partial {x_i}}}} \right|= \left| \gamma {k_i} \right| \le \left| \gamma {k^{\max }} \right|$. Thus, the uniqueness of the Bayesian follower game is guaranteed even if we change the cost function of MUs. Based on the unique Bayesian Nash equilibrium, the CSP can still achieve the profit maximization through backward induction. The corresponding proof steps are structurally the same as the proof of Theorems~3 and~4 with the uniform cost.

\linespread{1.36}
\bibliography{bibfile}

\begin{thebibliography}{10}

\bibitem{Sensorly}
``Sensorly,'' http://www.sensorly.com/.

\bibitem{Waze}
``Waze,'' https://www.waze.com/.

\bibitem{GreenGPS}
``Greengps,'' http://green-way.cs.illinois.edu/GreenGPS.html.

\bibitem{DietSensor}
``Dietsensor,'' https://www.dietsensor.com/.

\bibitem{noisetube}
``Noisetube,'' http://www.noisetube.net/.

\bibitem{surowiecki2007wisdom}
J.~Surowiecki, M.~P. Silverman  et~al.,
\newblock ``The wisdom of crowds,''
\newblock {\em American Journal of Physics}, vol. 75, pp. 190--192, 2007.

\bibitem{chakeri2017incentive}
A.~Chakeri and L.~Jaimes,
\newblock ``An incentive mechanism for crowdsensing markets with multiple
  crowdsourcers,''
\newblock {\em IEEE Internet of Things Journal}, 2017.

\bibitem{chen2016incentivizing}
Y.~Chen, B.~Li  and Q.~Zhang,
\newblock ``Incentivizing crowdsourcing systems with network effects,''
\newblock in {\em Proceedings of IEEE INFOCOM}, San Francisco, CA, USA, April
  2016.

\bibitem{zhang2016privacy}
M.~Zhang, L.~Yang, X.~Gong  and J.~Zhang,
\newblock ``Privacy-preserving crowdsensing: Privacy valuation, network effect,
  and profit maximization,''
\newblock in {\em Proceedings of IEEE GLOBECOM}, Washington, DC, USA, December
  2016.

\bibitem{easley2010networks}
D.~Easley and J.~Kleinberg,
\newblock {\em Networks, crowds, and markets: Reasoning about a highly
  connected world},
\newblock Cambridge University Press, 2010.

\bibitem{candogan2012optimal}
O.~Candogan, K.~Bimpikis  and A.~Ozdaglar,
\newblock ``Optimal pricing in networks with externalities,''
\newblock {\em Operations Research}, vol. 60, no. 4, pp. 883--905, 2012.

\bibitem{xiong2017sponsor}
Z.~Xiong, S.~Feng, D.~Niyato, P.~Wang  and Y.~Zhang,
\newblock ``Economic analysis of network effects on sponsored content: a
  hierarchical game theoretic approach,''
\newblock in {\em Proceedings of IEEE GLOBECOM}, Singapore, 2017.

\bibitem{zhang2016incentives}
X.~Zhang, Z.~Yang, W.~Sun, Y.~Liu, S.~Tang, K.~Xing  and X.~Mao,
\newblock ``Incentives for mobile crowd sensing: A survey,''
\newblock {\em IEEE Communications Surveys \& Tutorials}, vol. 18, no. 1, pp.
  54--67, 2016.

\bibitem{koutsopoulos2013optimal}
I.~Koutsopoulos,
\newblock ``Optimal incentive-driven design of participatory sensing systems,''
\newblock in {\em Proceedings of IEEE INFOCOM}, Turin, Italy, April 2013.

\bibitem{xu2015incentive}
J.~Xu, J.~Xiang  and D.~Yang,
\newblock ``Incentive mechanisms for time window dependent tasks in mobile
  crowdsensing,''
\newblock {\em IEEE Transactions on Wireless Communications}, vol. 14, no. 11,
  pp. 6353--6364, 2015.

\bibitem{han2016truthful}
K.~Han, C.~Zhang, J.~Luo, M.~Hu  and B.~Veeravalli,
\newblock ``Truthful scheduling mechanisms for powering mobile crowdsensing,''
\newblock {\em IEEE Transactions on Computers}, vol. 65, no. 1, pp. 294--307,
  2016.

\bibitem{zhou2017truthful}
R.~Zhou, Z.~Li  and C.~Wu,
\newblock ``A truthful online mechanism for location-aware tasks in mobile
  crowd sensing,''
\newblock {\em IEEE Transactions on Mobile Computing}, 2017.

\bibitem{yang2016incentive}
D.~Yang, G.~Xue, X.~Fang  and J.~Tang,
\newblock ``Incentive mechanisms for crowdsensing: Crowdsourcing with
  smartphones,''
\newblock {\em IEEE/ACM Transactions on Networking (TON)}, vol. 24, no. 3, pp.
  1732--1744, 2016.

\bibitem{zheng2017budget}
Z.~Zheng, F.~Wu, X.~Gao, H.~Zhu, S.~Tang  and G.~Chen,
\newblock ``A budget feasible incentive mechanism for weighted coverage
  maximization in mobile crowdsensing,''
\newblock {\em IEEE Transactions on Mobile Computing}, vol. 16, no. 9, pp.
  2392--2407, 2017.

\bibitem{wang2018melody}
H.~Wang, S.~Guo, J.~Cao  and M.~Guo,
\newblock ``Melody: a long-term dynamic quality-aware incentive mechanism for
  crowdsourcing,''
\newblock {\em IEEE Transactions on Parallel and Distributed Systems}, vol. 29,
  no. 4, pp. 901--914, 2018.

\bibitem{lin2017frameworks}
J.~Lin, D.~Yang, M.~Li, J.~Xu  and G.~Xue,
\newblock ``Frameworks for privacy-preserving mobile crowdsensing incentive
  mechanisms,''
\newblock {\em IEEE Transactions on Mobile Computing}, 2017.

\bibitem{lin2017sybil}
J.~Lin, M.~Li, D.~Yang, G.~Xue  and J.~Tang,
\newblock ``Sybil-proof incentive mechanisms for crowdsensing,''
\newblock in {\em Proceedings of IEEE INFOCOM}, Atlanta, GA, USA, May 2017.

\bibitem{zhan2017incentive}
Y.~Zhan, Y.~Xia, Y.~Liu, F.~Li  and Y.~Wang,
\newblock ``Incentive-aware time-sensitive data collection in mobile
  opportunistic crowdsensing,''
\newblock {\em IEEE Transactions on Vehicular Technology}, vol. 66, no. 9, pp.
  7849--7861, 2017.

\bibitem{gan2017social}
X.~Gan, Y.~Li, W.~Wang, L.~Fu  and X.~Wang,
\newblock ``Social crowdsourcing to friends: An incentive mechanism for
  multi-resource sharing,''
\newblock {\em IEEE Journal on Selected Areas in Communications}, vol. 35, no.
  3, pp. 795--808, 2017.

\bibitem{kawajiri2014steered}
R.~Kawajiri, M.~Shimosaka  and H.~Kashima,
\newblock ``Steered crowdsensing: incentive design towards quality-oriented
  place-centric crowdsensing,''
\newblock in {\em Proceedings of ACM Ubicomp}, Seattle, Washington, USA,
  September 2014.

\bibitem{peng2015pay}
D.~Peng, F.~Wu  and G.~Chen,
\newblock ``Pay as how well you do: A quality based incentive mechanism for
  crowdsensing,''
\newblock in {\em Proceedings of ACM Mobihoc}, Hangzhou, China, June 2015.

\bibitem{luo2015crowdsourcing}
T.~Luo, S.~S. Kanhere, H.-P. Tan, F.~Wu  and H.~Wu,
\newblock ``Crowdsourcing with tullock contests: A new perspective,''
\newblock in {\em Proceedings of IEEE INFOCOM}, Hong Kong, China, April 2015.

\bibitem{duan2012incentive}
L.~Duan, T.~Kubo, K.~Sugiyama, J.~Huang, T.~Hasegawa  and J.~Walrand,
\newblock ``Incentive mechanisms for smartphone collaboration in data
  acquisition and distributed computing,''
\newblock in {\em Proceedings of IEEE INFOCOM}, Orlando, Flodida, USA, March
  2012.

\bibitem{han2016posted}
K.~Han, H.~Huang  and J.~Luo,
\newblock ``Posted pricing for robust crowdsensing,''
\newblock in {\em Proceedings of ACM MobiHoc}, Paderborn, Germany, July 2016.

\bibitem{chakeri2017iterative}
A.~Chakeri and L.~G. Jaimes,
\newblock ``An iterative incentive mechanism design for crowd sensing using
  best response dynamics,''
\newblock in {\em Proceedings of IEEE ICC}, Paris, France, May 2017.

\bibitem{zhanyu2017incentive}
Y.~Zhan, Y.~Xia  and J.~Zhang,
\newblock ``Incentive mechanism in platform-centric mobile crowdsensing: A
  one-to-many bargaining approach,''
\newblock {\em Computer Networks}, 2017.

\bibitem{wang2018blockchain}
J.~Wang, M.~Li, Y.~He, H.~Li, K.~Xiao  and C.~Wang,
\newblock ``A blockchain based privacy-preserving incentive mechanism in
  crowdsensing applications,''
\newblock {\em IEEE Access}, vol. 6, pp. 17545--17556, 2018.

\bibitem{nie2017socially}
J.~Nie, Z.~Xiong, D.~Niyato, P.~Wang  and J.~Luo,
\newblock ``A socially-aware incentive mechanism for mobile crowdsensing
  service market,''
\newblock {\em arXiv preprint arXiv:1711.01050}, 2017.

\bibitem{li2017dynamic}
Y.~Li, C.~A. Courcoubetis  and L.~Duan,
\newblock ``Dynamic routing for social information sharing,''
\newblock {\em IEEE Journal on Selected Areas in Communications}, vol. 35, no.
  3, pp. 571--585, 2017.

\bibitem{fehr2000fairness}
E.~Fehr and S.~G{\"a}chter,
\newblock ``Fairness and retaliation: The economics of reciprocity,''
\newblock {\em Journal of economic perspectives}, vol. 14, no. 3, pp. 159--181,
  2000.

\bibitem{chen2013social}
X.~Chen, B.~Proulx, X.~Gong  and J.~Zhang,
\newblock ``Social trust and social reciprocity based cooperative d2d
  communications,''
\newblock in {\em Proceedings of the ACM International Symposium on Mobile ad
  hoc Networking and Computing (Mobihoc)}. ACM, 2013, pp. 187--196.

\bibitem{gong2015network}
X.~Gong, L.~Duan, X.~Chen  and J.~Zhang,
\newblock ``When social network effect meets congestion effect in wireless
  networks: Data usage equilibrium and optimal pricing,''
\newblock {\em IEEE Journal on Selected Areas in Communications (JSAC)}, vol.
  35, no. 2, pp. 449--462, 2017.

\bibitem{moulin1984dominance}
H.~Moulin,
\newblock ``Dominance solvability and cournot stability,''
\newblock {\em Mathematical Social Sciences}, vol. 7, pp. 83--102, 1984.

\bibitem{zhou2017peer}
D.~P. Zhou, M.~Roozbehani, M.~A. Dahleh  and C.~J. Tomlin,
\newblock ``How peer effects influence energy consumption,''
\newblock in {\em Proceedings of IEEE CDC}, Melbourne, Australia, December
  2017.

\bibitem{weisstein2003gershgorin}
E.~W. Weisstein,
\newblock ``Gershgorin circle theorem,''
\newblock 2003.

\bibitem{duong2016stackelberg}
N.~D. Duong, A.~Madhukumar  and D.~Niyato,
\newblock ``Stackelberg bayesian game for power allocation in two-tier
  networks,''
\newblock {\em IEEE Transactions on Vehicular Technology}, vol. 65, no. 4, pp.
  2341--2354, 2016.

\bibitem{chen2011optimal}
W.~Chen, P.~Lu, X.~Sun, B.~Tang, Y.~Wang  and Z.~A. Zhu,
\newblock ``Optimal pricing in social networks with incomplete information,''
\newblock in {\em International Workshop on Internet and Network Economics}.
  Springer, 2011, pp. 49--60.

\bibitem{bloch2013pricing}
F.~Bloch and N.~Qu{\'e}rou,
\newblock ``Pricing in social networks,''
\newblock {\em Games and economic behavior}, vol. 80, pp. 243--261, 2013.

\bibitem{zhang2017optimal}
Y.~Zhang and Y.-J. Chen,
\newblock ``Optimal nonlinear pricing in social networks under asymmetric
  network information,''
\newblock forthcoming.

\bibitem{fainmesser2016pricing}
I.~P. Fainmesser and A.~Galeotti,
\newblock ``Pricing network effects: Competition,''
\newblock {\em SSRN Electronic Journal}, early access, 2017.

\bibitem{BelhajValue}
M.~Belhaj and F.~Deroian,
\newblock ``The value of network information: Assortative mixing makes the
  difference,''
\newblock {\em AMSE Working Papers}, early access, 2017.

\bibitem{newman2018networks}
M.~Newman,
\newblock {\em Networks},
\newblock Oxford university press, 2018.

\bibitem{glaeser2000non}
E.~L. Glaeser and J.~Scheinkman,
\newblock ``Non-market interactions,''
\newblock Tech. {R}ep., National Bureau of Economic Research, 2000.

\bibitem{han2012game}
Z.~Han, D.~Niyato, W.~Saad, T.~Baar  and A.~Hjrungnes,
\newblock {\em Game theory in wireless and communication networks: theory,
  models, and applications},
\newblock Cambridge University Press, 2012.

\bibitem{kailath1980linear}
T.~Kailath,
\newblock {\em Linear systems}, vol. 156,
\newblock Prentice-Hall Englewood Cliffs, NJ, 1980.

\bibitem{bailey1969bounds}
D.~W. Bailey and D.~E. Crabtree,
\newblock ``Bounds for determinants,''
\newblock {\em Linear Algebra and its Applications}, vol. 2, no. 3, pp.
  303--309, 1969.

\end{thebibliography}
%\newpage

\end{document}